\def\ps@pprintTitle{%
   \let\@oddhead\@empty
   \let\@evenhead\@empty
   \def\@oddfoot{\reset@font\hfil\thepage\hfil}
   \let\@evenfoot\@oddfoot
}
\DeclareMathOperator*{\argmax}{argmax}
\newtheorem{theorem}{Theorem}
\newtheorem{lemma}{Lemma}
\newtheorem{definition}{Definition}
\renewcommand{\d}{\mathrm{d}}
\newcommand{\p}{p}
\newcommand{\ignore}[1]{}
\newcommand{\F}{\mathcal{F}}
\renewcommand{\H}{\mathcal{H}}
\newcommand{\C}{\mathcal{C}}
\newcommand{\Px}{\mathcal{P}}
\renewcommand{\P}{\mathbb{P}}
\newcommand{\E}{\mathbb{E}}
\newcommand{\reals}{\mathbb{R}}
\newcommand{\nats}{\mathbb{N}}
\newcommand{\NS}{\mathbb{NS}}
\newcommand{\NSrm}{\mathrm{NS}}
\newcommand{\ind}{\mathbb{I}}
\newcommand{\eps}{\epsilon}
\renewcommand{\dim}{d}
\renewcommand{\k}{k}
\newcommand{\X}{\mathcal{X}}
\newcommand{\Y}{\mathcal{Y}}
\newcommand{\Z}{\mathcal{Z}}
\newcommand{\target}{f}
\newcommand{\accept}{{\textsc{Accept}}}
\newcommand{\reject}{{\textsc{Reject}}}
\newcommand{\dist}{\rho}
\newcommand{\distance}{\dist}
\begin{document}
\begin{frontmatter}

\title{Testing Piecewise Functions}

\author{Steve Hanneke}
\ead{steve.hanneke@gmail.com}

\author{Liu Yang}
\ead{liu.yang0900@outlook.com}

\begin{abstract}
This work explores the query complexity of property testing for 
general piecewise functions on the real line, in the active and passive 
property testing settings.  The results are proven under an abstract 
zero-measure crossings condition, which has as special cases 
piecewise constant functions and piecewise polynomial functions.
We find that, in the active testing setting, the query complexity 
of testing general piecewise functions is independent of the 
number of pieces.  We also identify the optimal dependence 
on the number of pieces in the query complexity of passive testing
in the special case of piecewise constant functions.
\end{abstract}

\begin{keyword}
Property testing; Active testing; Learning theory; Real-valued functions
\end{keyword}

\end{frontmatter}



\section{Introduction}

Property testing is a well-studied class of problems, in which an algorithm must 
decide (with high success probability) from limited observations of some object whether a given property
is satisfied, or whether the object is in fact far from all objects with the property \cite{rubinfeld:96,goldreich:98}.
In many property testing problems, it is natural to describe the object as a 
function $\target$ mapping some instance space $\X$ to a value space $\Y$, and the property
as a family of functions $\F$.  The property testing problem is then equivalently stated
as determining whether $\target \in \F$ or whether $\target$ is far from all elements of $\F$,
for a natural notion of distance.  In this setting, the observations are simply $(x,\target(x))$ points.
The setting has been studied in several variations, depending on how the $x$ observation points are 
selected (e.g., at random or by the algorithm). 
In particular, this type of property testing problem is closely
related to the PAC \emph{learning} model of \cite{valiant:84}.\footnotemark{\hskip 1mm}  
One of the main theoretical questions in the study of property testing is how many observations
are needed by the optimal tester, a quantity known as the \emph{query complexity} of the testing problem.  
In the above context, this question is most interesting when one
can show that the query complexity of the testing problem is significantly smaller than the 
query complexity of the corresponding PAC learning problem \cite{vapnik:74,blumer:89,ehrenfeucht:89,hanneke:16a,hanneke:fntml,dasgupta:05}.

\footnotetext{In the 
PAC (\emph{probably approximately correct}) learning model proposed by 
Valiant \cite{valiant:84} (and previously by Vapnik and Chervonenkis \cite{vapnik:74}),
it is assumed that $\target \in \F$, and an algorithm is tasked with choosing any $\hat{f}$ that is (with high probability) 
within distance $\eps$ of $\target$, given access to a finite number of $(x,\target(x))$ pairs drawn at random 
(or in some variants, selected by the algorithm).}

The property testing literature is by now quite broad, and includes testing algorithms and characterizations of 
their query complexities for many function classes $\F$; see \cite{ron:08,ron:10,goldreich:17} for introductions to this literature and some of its key techniques.
However, nearly all of the work on the above type of property testing problem has focused on the special 
case of \emph{binary} functions, where $|\Y|=2$, or in some cases with $\Y$ a general finite field.  
In this article, we are interested in the study of more general types of functions, including real-valued functions.
Previous works on testing real-valued functions include testers for whether $\target$ is 
monotone \cite{goldreich:00,ergun:00,berman:14},
unate \cite{baleshzar:17}, 
or Lipschitz \cite{berman:14,jha:13}.
In the present work, we study a general problem of testing \emph{piecewise} functions.
Specifically, we consider a scenario where $\X = \reals$, and where the function class 
$\F$ can be described as a $\k$-\emph{piecewise} function (for a given $\k \in \nats$), 
where each piece is a function in a given base class of functions $\H$.
Formally, defining $t_{0} = -\infty$ and $t_{\k} = \infty$, for any $t_{1},\ldots,t_{\k-1} \in \reals$ with $t_{1} \leq \cdots \leq t_{\k-1}$, 
and any $h_{1},\ldots,h_{\k} \in \H$, define (for any $x \in \reals$)
\begin{equation*}
f\!\left( x ; \{h_{i}\}_{i=1}^{\k}, \{t_{i}\}_{i=1}^{\k-1} \right) = h_{i}(x) \text{ for the $i$ such that } t_{i-1} < x \leq t_{i}.
\end{equation*}
Then we consider classes $\F = \F_{\k}(\H)$ defined as 
\begin{equation*}
\F_{\k}(\H) = \left\{ f\!\left(\cdot; \{h_{i}\}_{i=1}^{\k}, \{t_{i}\}_{i=1}^{\k-1} \right) : h_{1},\ldots,h_{\k} \in \H,  t_{1} \leq \cdots \leq t_{\k-1} \right\}.
\end{equation*}
In the results below, we will be particularly interested in the dependence on $\k$ in the 
query complexity of the testing problem.  To be clear, the values $t_{1},\ldots,t_{\k-1}$ and functions $h_{1},\ldots,h_{\k}$ 
are all free parameters, with different choices of these yielding different functions, all contained in $\F_{\k}(\H)$.
Thus, in the testing problem (defined formally below), 
the algorithm may directly depend on $\k$ and $\H$, but in the case of $\target \in \F_{\k}(\H)$ the specific 
values of $t_{1},\ldots,t_{\k-1}$ and functions $h_{1},\ldots,h_{\k} \in \H$ specifying $\target$ are all considered as \emph{unknown}.

In this work, our primary running example is the scenario where $\Y = \reals$ and $\H$ is the set of \emph{degree-$\p$ polynomials}: 
$\H = \{ x \mapsto \sum_{i=0}^{\p} \alpha_{i} x^{i} : \alpha_{0},\ldots,\alpha_{\p} \in \reals \}$,
so that $\F_{\k}(\H)$ is the set of $\k$-piecewise degree-$\p$ polynomials.
A further interesting special case of this is when $\p=0$, in which case $\F_{\k}(\H)$ is the set of $\k$-piecewise \emph{constant} functions.
However, our general analysis is more abstract, and will also apply to many other interesting function classes $\H$. 

Specifically, for the remainder of this article, 
we consider $\Y$ as an arbitrary nonempty set (equipped with an appropriate $\sigma$-algebra), 
and $\H$ as an arbitrary nonempty set of measurable functions $\X \to \Y$
satisfying the following\break \emph{zero-measure crossings} property:
\begin{equation}
\label{eqn:zero-measure-crossings}
\forall h,h^{\prime} \in \H,~~ h \neq h^{\prime} \implies \lambda( \{ x : h(x) = h^{\prime}(x) \} ) = 0,
\end{equation}
where $\lambda$ denotes the Lebesgue measure.
It is well known that this property is satisfied by polynomial functions: 
for any two polynomial functions $h,h^{\prime}$ of degree $\p$, 
if they agree on a nonzero measure set of points, then we can find in that set distinct points $x_{1},\ldots,x_{\p+1}$ 
on which they agree, but since the values on any $\p+1$ distinct points uniquely determine the polynomial function,
it must be that $h = h^{\prime}$.
Thus, the analysis below indeed applies to piecewise polynomial functions as a special case.
The zero-measure crossings property is also satisfied by many other interesting function classes, 
such as shifted sine functions $\{ x \mapsto \sin(x + t) : t \in \reals \}$ 
or normal pdfs $\{ x \mapsto c \cdot e^{-(x-t)^{2}/2} : t \in \reals \}$.




The testing problem is defined as follows.
Fix a value $\eps \in (0,1)$ and a probability measure $\Px$ over $\X$, 
and for any measurable $f,g : \X \to \Y$, define $\dist(f,g) = \Px(x : f(x) \neq g(x))$, 
the $L_{0}(\Px)$ (pseudo)distance between $f$ and $g$.
Further define $\dist(f,\F) = \inf_{g \in \F} \dist(f,g)$, the distance of $f$ from a set of functions $\F$.
In the active testing protocol, the algorithm samples a number $s$ of iid unlabeled examples from $\Px$,
and then interactively queries for the $\target$ values at points of its choosing from among these: that is, 
it chooses one of the $s$ examples $x$, queries for its value $\target(x)$, then selects another of the $s$ examples $x^{\prime}$,
queries for its value $\target(x^{\prime})$, and so on, until it eventually halts and produces a decision of either \accept~or \reject.
The following definition is taken from \cite{yang:12}.

\begin{definition}
\label{defn:tester}
An $s$-sample $q$-query $\eps$-tester for $\F$ under the distribution $\Px$ is a randomized algorithm $A$
that draws a sample $S$ of size at most $s$ iid from $\Px$, sequentially queries for the value of $\target$ on at most $q$ points of $S$,
and satisfies the following properties (regardless of the identity of $\target$):
\begin{itemize}
\item if $\target \in \F$, it decides \accept~with probability at least $\frac{2}{3}$.
\item if $\dist(\target,\F) \geq \eps$, it decides \reject~with probability at least $\frac{2}{3}$.
\end{itemize}
\end{definition}

We will be interested in identifying values of $q$ for which there exist $s$-sample $q$-query $\eps$-testers for $\F$, 
known as the \emph{query complexity}; we are particularly interested in this when $s$ is polynomial in $\k$, $1/\eps$, and the complexity of $\H$ (defined below).
In the special case that $q=s$, so that the algorithm queries the values at \emph{all} of the points in $S$,
the algorithm is called a \emph{passive tester} \cite{goldreich:98}, whereas in the general case of $q \leq s$ it is referred to as an \emph{active tester} \cite{yang:12} 
(in analogy to the setting of \emph{active learning} studied in the machine learning literature \cite{hanneke:fntml,settles:12}).\\

\noindent {\bf Remark on general distributions $\mathbf{\Px}$:}
For simplicity, we will focus on $\Px = {\rm Uniform}(0,1)$ in this work.
However, all of our results easily generalize to all distributions $\Px$ over $\reals$
absolutely continuous with respect to Lebesgue measure.
Specifically, \cite{yang:12} discuss a simple technique 
which produces this generalization, by using the empirical distribution to effectively rescale the 
real line so that the distribution appears approximately uniform.  One can show that this technique is also applicable 
in the present more-general context as well.  The rescaling effectively changes the function class $\H$, 
but its graph dimension (defined below) remains unchanged, and the zero-measure crossings property is preserved due to 
$\Px$ being absolutely continuous.  In the special case of testing piecewise \emph{constant} functions, 
even the restriction to absolutely continuous distributions can be removed, as then the zero-measure crossings
property always holds.  The interested reader is referred to \cite{yang:12} for the details of this rescaling technique.

\subsection{Related Work: Testing Unions of Intervals}
\label{subsec:uint}


The work of \cite{yang:12} explored the query complexity of testing in both the active and passive models,
for a variety of function classes, but all under the restriction $\Y = \{0,1\}$.  Of the results of \cite{yang:12}, the
most relevant to the present work are results on the query complexity of 
testing \emph{unions of intervals}: $x \mapsto \ind\left[ x \in \bigcup_{i = 1}^{n} [t_{2i-1},t_{2i}] \right]$, for a fixed $n \in \nats$, 
defined for all nondecreasing sequences $t_{1},\ldots,t_{2n} \in \reals \cup \{\pm \infty\}$.
They specifically find that the query complexity of testing unions of intervals 
is $O(1/\eps^{4})$ --- independent of $n$ --- in the active testing setting, 
and is $O(\sqrt{n}/\eps^{5})$ in the passive testing setting, with an $\Omega(\sqrt{n})$ lower bound; 
these results strengthened an earlier result of Kearns and Ron \cite{kearns:00}.
%
Note that unions of intervals are a special case of piecewise constant functions,
and indeed the techniques we employ in constructing the tester below 
closely parallel the analysis of unions of intervals by \cite{yang:12}.  
However, to extend that approach to general piecewise functions --- even piecewise \emph{constant} functions --- requires
careful consideration about what the appropriate generalization of the technique should be.
In particular, the original proof involved a \emph{self-correction} step, wherein $\target$ is 
replaced by a smoothed function, which is then rounded again to a binary-valued function.
This kind of smoothing and rounding interpretation no longer makes sense for general $\Y$-valued
functions.  We are therefore required to re-interpret these steps in the more general setting,
where we find that they can be reformulated as \emph{voting} on the function value at each point  
(rather than rounding a smoothed version of $\target$).  In the end, the active tester below 
for general $\F_{\k}(\H)$ functions has significant differences compared to the original tester 
for unions of intervals from \cite{yang:12}.  Nevertheless, we do recover the dependences 
on $\k$ established by \cite{yang:12} for active and passive testing of unions of intervals, 
as special cases of our results on testing piecewise constant functions.  Our results 
on testing general piecewise functions further extend this beyond piecewise constant 
functions, and require the introduction of additional uniform concentration arguments from VC theory.
These considerations about general piecewise functions also lead us to an appropriate generalization 
of the notion of \emph{noise sensitivity}, a quantity 
commonly appearing in the property testing literature for binary-valued functions \cite{yang:12}.

It is also worth mentioning that, for binary-valued functions in higher dimensions $\reals^{n}$, 
the noise sensitivity has also been used to test for the property that the decision boundary 
of $\target$ has low \emph{surface area} \cite{kothari:14}.  For simplicity, the present work focuses on 
the one-dimensional setting, leaving for future work the appropriate generalization of these results 
to higher-dimensional spaces.

\subsection{The Graph Dimension}

For any set $\Z$, any collection $\C$ of subsets of $\Z$, and any $m \in \nats \cup \{0\}$,
following \cite{vapnik:71}, we say $\C$ \emph{shatters} a set $\{z_{1},\ldots,z_{m}\} \subseteq \Z$
if 
\[
| \{ C \cap \{z_{1},\ldots,z_{m}\} : C \in \C \} | = 2^{m}.
\]
The VC dimension of $\C$ is then 
defined as the largest integer $m$ for which $\exists \{z_{1},\ldots,z_{m}\} \subseteq \Z$ shattered by $\C$,
or as infinity if no such largest $m$ exists.

The VC dimension is an important quantity in characterizing the optimal sample complexity of statistical learning
for binary classification.  It is also useful in our present context, for the purpose of defining a related complexity
measure for general functions.  Specifically, the \emph{graph dimension} of $\H$, denoted $\dim$ below, 
is defined as the VC dimension of the collection $\{ \{ (x,h(x)) : x \in \X \} : h \in \H \}$ (where $\Z = \X\times\Y$ in this context).
For the remainder of this article, we restrict to the case $0 < \dim < \infty$,
but otherwise 
we can consider $\H$ as a completely \emph{arbitrary} set of functions (subject to \eqref{eqn:zero-measure-crossings}).

To continue the examples from above, we note that in the case of 
$|\Y| \geq 2$ and $\H$ as the set of constant functions (so that $\F_{\k}(\H)$ is the $\k$-piecewise constant functions) 
one can easily show $\dim = 1$.
Moreover, in the case of $\Y = \reals$ and $\H$ as the set of degree-$\p$ real-valued polynomial functions
(so that $\F_{\k}(\H)$ is the $\k$-piecewise degree-$\p$ polynomial functions),
it follows from basic algebra that $\dim = \p+1$, since any polynomial $f$ is uniquely
determined by any $\p+1$ distinct $(x,f(x))$ pairs (so that $\dim \leq \p+1$), 
and any $\p+1$ pairs $(x,y)$ (with distinct $x$ components) can be fit by a polynomial 
(so that $\dim \geq \p+1$, by choosing $\p+1$ distinct $x$ points, 
each with two corresponding $y$ values, 
and all $2^{\p+1}$ choices of which $y$ value to use for each point can be fit by a degree-$\p$ polynomial).

The results below will be expressed in terms of $\k$, $\eps$, and $\dim$.
The dependence of the query complexity on each of these is an important
topic to consider.  However, in the present work, we primarily focus on identifying
the optimal dependence on $\k$.  
The optimal \emph{joint} dependence on $\k$, $\eps$, and the complexity of $\H$ is a problem we leave for future work.
%

\subsection{Main Results}

We are now ready to state our main results.  
We present their proofs in the sections below.
The first result is for active testing.  Its proof in Section~\ref{sec:active} below is based 
on an analysis of a novel generalization of the notion of 
the \emph{noise sensitivity} of a function.

\begin{theorem}
\label{thm:active}
For any $\eps \in (0,1/2)$, there exists an 
$s$-sample $q$-query $\eps$-tester for $\F_{\k}(\H)$ under the distribution ${\rm Uniform}(0,1)$,
with 
$s = O\!\left( \frac{\dim \k}{\eps^{6}} \ln\!\left(\frac{1}{\eps}\right) \right)$ 
and 
$q = O\!\left( \frac{\dim}{\eps^{8}}\ln\!\left(\frac{1}{\eps}\right) \right)$.
\end{theorem}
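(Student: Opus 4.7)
The plan is to extend the noise-sensitivity approach of \cite{yang:12} from binary-valued to general $\H$-valued functions. Fix a scale $\delta = \Theta(\eps/\k)$, and define a generalized noise sensitivity $\NS_{\delta}(\target)$ as the probability that a uniformly random window $W \subseteq [0,1]$ of width $\delta$, equipped with $m = \Theta(\dim)$ auxiliary iid points inside $W$, is \emph{inconsistent}: no single $h \in \H$ simultaneously matches $\target$ on all $m$ points. The zero-measure crossings property \eqref{eqn:zero-measure-crossings} guarantees that whenever such an $h$ does exist, it is (up to a measure-zero set) unique; this is what makes the generalization of the binary ``rounding'' step to $\H$-valued voting well-defined.

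\textbf{Completeness} is short: if $\target = f(\cdot; \{h_i\}, \{t_i\}) \in \F_{\k}(\H)$, then a window $W$ can be inconsistent only if it straddles one of the $\k-1$ breakpoints $t_i$ (inside any piece, $\target$ coincides with a single $h_i \in \H$, so choosing $h = h_i$ certifies consistency). A union bound gives $\NS_{\delta}(\target) \leq (\k-1)\delta$, which is at most $\eps/2$ for a suitable choice of constant.

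The main obstacle is \textbf{soundness}: deducing $\dist(\target, \F_{\k}(\H)) < \eps$ from a small value of $\NS_{\delta}(\target)$. The approach is \emph{self-correction by voting}, replacing the smooth-and-round construction used in the $\{0,1\}$-valued case. For each $x$, let $\hat h \in \H$ be the function that fits $\target$ consistently on the largest fraction of random $\delta$-windows containing $x$, and set $\tilde{\target}(x) = \hat h(x)$. Two facts need to be established: (a) $\tilde{\target}$ equals $\target$ on a set of measure at least $1 - O(\eps)$ --- most windows are consistent by assumption, and on the interior of any true piece the winning vote matches that piece's $h_i$ by the zero-measure crossings property; and (b) $\tilde{\target}$ itself lies in $\F_{\k'}(\H)$ for $\k' = O(\k)$, because the winning $\hat h$ can change only near points where a constant fraction of nearby windows are inconsistent, and the total measure of such points is $O(\k\delta)$. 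A bucketing step that merges small pieces into their neighbors then reduces $\k'$ to $\k$ at an additional $O(\eps)$ cost in distance. Throughout, uniform control over the candidate fits $h \in \H$ is supplied by standard VC-type concentration inequalities, which is where the graph-dimension factor $\dim$ enters.

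The \textbf{active tester} then estimates $\NS_{\delta}(\target)$ empirically. Draw $s = O(\dim \k \eps^{-6} \log(1/\eps))$ iid points from ${\rm Uniform}(0,1)$; the choice of $s$ ensures that sufficiently many of the $\Theta(1/\delta)$ length-$\delta$ intervals contain at least $m = O(\dim)$ sample points to serve as test windows. Select $O(\eps^{-4})$ such windows uniformly at random, query $\target$ on their $m$ points --- a total of $q = O(\dim \eps^{-8} \log(1/\eps))$ queries, independent of $\k$ --- and check each window for consistency with some $h \in \H$ by solving a finite constraint-satisfaction problem. Accepting iff the empirical fraction of inconsistent windows falls below a threshold proportional to $\eps$, together with a Chernoff bound plus a VC union bound over candidate $h$'s, yields the $2/3$ success guarantees in both the accept and reject cases.
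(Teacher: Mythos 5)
Your high-level plan — generalize noise sensitivity to $\H$-valued functions, use a self-correction/voting argument for soundness, estimate empirically — matches the paper's approach. But there are several concrete gaps and inconsistencies that prevent this sketch from being a proof.

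\textbf{The noise sensitivity is not anchored to $f(x)$.} You define a window as inconsistent if no $h \in \H$ fits $\target$ on all $m$ test points. The paper's definition is
$\NSrm_{\delta}(f,x;\H) = \inf_{h \in \H_{(x,f(x))}} \P\bigl( h(x^{\prime}) \neq f(x^{\prime}) \mid x \bigr)$:
the infimum ranges only over $h$ that agree with $f$ \emph{at the center point $x$}. This restriction is what lets the soundness argument conclude $\dist(f,g) \le \eps/4$ via Markov's inequality: the constructed $g$ disagrees with $f$ at $x$ precisely when the plurality fit $h_{x}$ does not lie in $\H_{(x,f(x))}$, which directly forces $\NSrm_{\delta}(f,x;\H) > \tau$. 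Without anchoring, ``most windows around $x$ are consistent'' gives no handle on whether the winning vote matches $f(x)$, and your step~(a) becomes unsupported. Your justification for (a) (``on the interior of any true piece the winning vote matches that piece's $h_i$'') presupposes a piecewise structure that is exactly what the soundness case does \emph{not} assume.

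\textbf{The scale $\delta$ is wrong by a factor of $\eps$.} You set $\delta = \Theta(\eps/\k)$; the paper needs $\delta = \eps^{2}/(32\k)$. The extra $\eps$ is essential: in Lemma~\ref{lem:NSnonInt-piecewise} the threshold is $\tau = \tfrac{4}{\eps}\NS_{\delta}(f;\H)$ and the proof requires $\tau < 1/8$. With your $\delta$, the completeness bound on $\NS_{\delta}$ is $\Theta(\eps)$, giving $\tau = \Theta(1)$, which breaks both the Markov step and the Lipschitz/integral argument bounding the number of pieces of the self-corrected function.

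\textbf{The tester's parameters are internally inconsistent.} You stipulate $m = \Theta(\dim)$ points per window and $O(\eps^{-4})$ windows, which yields $q = O(\dim\,\eps^{-4})$, not the claimed $q = O(\dim\,\eps^{-8}\log(1/\eps))$. The paper's query budget comes from $m = O(\eps^{-4})$ reference points and $\ell = O\!\bigl(\dim\,\eps^{-4}\log(1/\eps)\bigr)$ auxiliary points per reference point; the much larger $\ell$ is what makes the VC relative-deviation term $A_{\ell,m}$ small enough (of order $\eps^{4}$) to resolve the $(1+\eps/8)$ vs.\ $(1+\eps/4)$ threshold gap. With only $\Theta(\dim)$ points per window, a VC bound cannot give the needed additive accuracy, and an exact consistency test over so few points is not a reliable proxy for the continuous noise-sensitivity quantity. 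Finally, you do not treat the regime $\k < 80/\eps$, where the noise-sensitivity tester's guarantees break down and the paper falls back to a learn-then-validate tester.

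In short, the skeleton is right, but the anchoring of the noise sensitivity to $f(x)$, the $\eps^{2}/\k$ scaling of $\delta$, and the $\dim\,\eps^{-4}\log(1/\eps)$ points-per-window count are all load-bearing, and each is either absent or incorrect in your sketch.
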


In particular, this immediately implies that the optimal dependence on $\k$ in the query complexity of active testing is $O(1)$.
This independence from $\k$ in the query complexity is the main significance of this result.\footnote{Since the dependence on $\eps$
in this result is greater in the bound on $q$ than in $s$ (due to some over-counting in the bound on $q$ when $\eps$ is small), 
we should note that this result also implies 
the existence of an $s$-sample $\min\{q,s\}$-query $\eps$-tester, since we could simply query all $s$ samples and then 
simulate the interaction with the oracle internally.  This becomes relevant for $\eps \ll 1/\sqrt{\k}$.  That said, we describe 
a simple testing strategy at the end of Section~\ref{sec:active} which obtains 
$s = q = O\!\left( \frac{\dim \k \ln(\k)}{\eps}\ln\!\left(\frac{1}{\eps}\right) \right)$, 
which is superior in this range $\eps \ll 1/\sqrt{\k}$ anyway.}

Our second result is for \emph{both} active and passive testing, and applies specifically to the special case of piecewise \emph{constant} functions
(for any $\Y$ space).  The upper bound in this result is again based on a generalization 
of the notion of the \emph{noise sensitivity} of a function, and recovers as a special case 
the result of \cite{yang:12} for unions of intervals.  The lower bound is essentially already known, 
as it was previously established by \cite{kearns:00,yang:12} for unions of intervals, which are a special 
case of piecewise constant functions.  The proof of the lower bound 
for general piecewise constant functions follows immediately 
from this via a simple reduction argument.
The complete proof of this theorem is presented in Section~\ref{sec:constant} below.

\begin{theorem}
\label{thm:constant}
If $\H$ is the set of \emph{constant} functions, then 
for any $\eps \in (0,1/2)$, there exists an  
$s$-sample $q$-query $\eps$-tester for $\F_{\k}(\H)$ under the distribution\break 
${\rm Uniform}(0,1)$,
with $s = O\!\left( \frac{\sqrt{k}}{\eps^{5}} \right)$, and 
with $q = O\!\left( \frac{1}{\eps^{4}} \right)$ for active testing, and $q = s$ for passive testing.

Moreover, 
in this case, if 
$\eps \in (0,1/8)$, 
\emph{every} $s$-sample $s$-query $\eps$-tester for $\F_{\k}(\H)$ under the distribution ${\rm Uniform}(0,1)$ 
has $s = \Omega\!\left(\sqrt{\k}\right)$.
\end{theorem}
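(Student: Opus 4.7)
The plan is to generalize the noise-sensitivity approach of \cite{yang:12}. For a scale $\delta \in (0,1)$, define
\[
\NS_{\delta}(f) = \P_{x,x'}\!\left[ f(x) \neq f(x') \right],
\]
where $x \sim {\rm Uniform}(0,1)$ and $x'$ is drawn uniformly from a $\delta$-neighborhood of $x$. The upper bound rests on two lemmas about this quantity. First, if $f \in \F_{\k}(\H)$, then $\NS_{\delta}(f) = O(\k\delta)$, since $f(x) \neq f(x')$ forces the pair $(x,x')$ to straddle one of the at most $\k-1$ piece boundaries. Second, if $\dist(f,\F_{\k}(\H)) \geq \eps$, then for a suitably small $\delta = \Theta(\eps/\k)$ one has $\NS_{\delta}(f) = \Omega(\eps)$, separated from the first bound by a constant factor. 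To prove the second lemma I would adapt the self-correction of \cite{yang:12}, but replace its binary ``rounding'' step by a pointwise \emph{voting} step: set $\tilde{f}(x)$ to be a most common value of $f$ on the $\delta$-window around $x$, observe that small $\NS_{\delta}(f)$ forces $\tilde{f}$ to agree with $f$ on most of $[0,1]$, and count how often $\tilde{f}$ can change value to show $\tilde{f} \in \F_{O(\k)}(\H)$, contradicting $\dist(f,\F_{\k}(\H)) \geq \eps$.

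Given the gap between these two regimes, the tester takes a transparent form: draw an unlabeled sample of size $s$, set $\delta = \Theta(\eps/\k)$, form pairs of sampled points within distance $\delta$, query $\target$ at the endpoints of enough such pairs, and accept iff the empirical fraction of pairs with $\target(x) \neq \target(x')$ lies below a threshold placed between the two bounds. A birthday-style calculation shows that $s = O(\sqrt{\k}/\eps^{5})$ samples suffice to yield $\Omega(1/\eps^{4})$ nearly-disjoint within-$\delta$ pairs with high probability, so the active tester uses $q = O(1/\eps^{4})$ queries and a Hoeffding bound delivers sharp enough estimation of $\NS_{\delta}(\target)$. The passive variant simply queries all $s$ samples and uses the same pair statistic on those.

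For the matching $\Omega(\sqrt{\k})$ passive lower bound, I would take $\Y = \{0,1\}$ and observe that $\F_{\k}(\H)$ then coincides with the class of functions expressible as a union of at most $\lceil \k/2 \rceil$ intervals: any $\k$-piecewise $\{0,1\}$-valued function, after merging consecutive identical pieces, has at most $\lceil \k/2 \rceil$ maximal $1$-intervals, while conversely any such union of intervals admits a $\k$-piecewise description (padding with empty pieces via repeated $t_i$). Hence any $s$-sample $s$-query $\eps$-tester for $\F_{\k}(\H)$ is automatically an $s$-sample $s$-query $\eps$-tester for unions of $\lceil \k/2 \rceil$ intervals, and the lower bound $\Omega(\sqrt{\lceil \k/2 \rceil}) = \Omega(\sqrt{\k})$ of \cite{kearns:00,yang:12} transfers directly.

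The main obstacle I expect is the second sensitivity lemma: for binary $\target$, the rounding of a smoothed function in \cite{yang:12} is a clean operation because $\target$ takes only two values, but for arbitrary $\Y$ one must instead control the voting procedure, carefully bounding the number of distinct ``winning'' values that can arise across $[0,1]$ and relating the locations where the winner changes to pairs contributing to $\NS_{\delta}(f)$. The clean counting argument in the piecewise constant case is also what points the way toward the uniform concentration (VC-based) machinery needed for the generalization in Theorem~\ref{thm:active}.
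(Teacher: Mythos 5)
Your upper-bound plan is essentially the paper's: define $\NS_{\delta}$ (which, specialized to constant pieces, reduces to $\P[f(x)\neq f(x')]$ as you have it), prove the two noise-sensitivity lemmas (the second via the voting-based self-correction), and estimate $\NS_{\delta}$ by harvesting within-$\delta$ pairs from an iid sample via the birthday paradox. One imprecision worth flagging: the separation between the two regimes is \emph{not} a constant factor. Lemma~\ref{lem:NSInt-piecewise} gives $\NS_{\delta}(f)\le(\k-1)\delta/2$ when $f\in\F_{\k}(\H)$, while Lemma~\ref{lem:NSnonInt-piecewise} gives $\NS_{\delta}(f)>(\k-1)\frac{\delta}{2}(1+\frac{\eps}{4})$ when $\dist(f,\F_{\k}(\H))\ge\eps$ --- a multiplicative gap of only $1+\Theta(\eps)$ around a quantity of size $\Theta(\eps^2)$. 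It is exactly this $1+\Theta(\eps)$ gap (not a constant-factor one and not an $\Omega(\eps)$ lower bound on $\NS_{\delta}$) that forces $\Theta(1/\eps^4)$ pairs; you arrive at the right count but the stated reasoning for it is off.

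The lower bound contains a genuine gap. You claim that, with $\Y=\{0,1\}$, $\F_{\k}(\H)$ \emph{coincides} with the class of unions of at most $\lceil\k/2\rceil$ intervals. The forward inclusion is fine, but the converse is false: a union of $m$ intervals with all $2m$ endpoints interior to $(0,1)$ requires $2m+1$ pieces, so a union of $\lceil\k/2\rceil$ such intervals needs up to $2\lceil\k/2\rceil+1\ge\k+1$ pieces and is not in $\F_{\k}(\H)$. (Concretely, for $\k=3$, the $5$-piece pattern $0$-$1$-$0$-$1$-$0$ is a union of two intervals but is not $3$-piecewise.) Consequently a tester for $\F_{\k}(\H)$ is \emph{not} automatically a tester for unions of $\lceil\k/2\rceil$ intervals: the completeness direction fails for exactly those unions that need more than $\k$ pieces. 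The paper avoids this by reducing from unions of $n=\lfloor(\k-1)/2\rfloor$ intervals (which are guaranteed to lie in $\F_{\k}(\H)$), and then handling soundness separately: $\F_{\k}(\H)$ with binary labels can contain unions of $n+1$ intervals (complements), so the paper argues that for $\k>1/\eps$ any union of $n+1$ intervals is within distance $1/\k<\eps$ of a union of $n$ intervals, whence being $2\eps$-far from unions of $n$ intervals implies $\eps$-far from $\F_{\k}(\H)$. Your argument needs this asymmetric, off-by-one treatment; as written, the ``coincidence'' step is wrong and the transfer of the $\Omega(\sqrt{\k})$ bound does not go through.
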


In particular, this implies that the optimal query complexity of passive testing of $\k$-piecewise constant functions 
has dependence $\sqrt{\k}$ on the number of pieces $\k$.
It is also straightforward to extend this lower bound to $\k$-piecewise degree-$\p$ polynomial functions. 
However, our results below do not imply an upper bound with $\sqrt{\k}$ dependence on $\k$ for passive testing of this larger function class,
and as such identifying the optimal query complexity of passive testing for piecewise degree-$\p$ polynomials 
(and for general classes $\H$ satisfying \eqref{eqn:zero-measure-crossings}) 
remains an interesting open problem.



\section{A Generalization of Noise Sensitivity}

Here we develop a generalization of the definition of the
\emph{noise sensitivity} used by \cite{yang:12} in their analysis of testing unions of intervals;
this will be the key quantity in the proofs of the above theorems.
Throughout this section, we let $\Px$ be the Lebesgue measure restricted to $[0,1]$: i.e., the distribution ${\rm Uniform}(0,1)$.
For any $x \in \X$ and $y \in \Y$, define $\H_{(x,y)} = \{h \in \H : h(x)=y\}$.
Let $x \sim {\rm Uniform}(0,1)$, and conditioned on $x$, 
let $x^{\prime} \sim {\rm Uniform}(x-\delta,x+\delta)$.
Define the \emph{instantaneous noise sensitivity} at $x$ as\footnote{The original definition 
of \cite{yang:12} essentially defined $\NSrm_{\delta}(f,x) = \P( f(x^{\prime}) \neq f(x) | x )$.  The 
involvement of $\H$ in our generalization of the definition will be crucial to the results below.} 
\begin{equation*}
\NSrm_{\delta}(f,x;\H) 
= \inf_{h \in \H_{(x,f(x))}} \P( h(x^{\prime}) \neq f(x^{\prime}) | x),
\end{equation*}
or in the event that $\H_{(x,f(x))}$ is empty, define $\NSrm_{\delta}(f,x;\H) = 1$.
Then define the \emph{noise sensitivity} as 
\begin{equation*}
\NS_{\delta}(f;\H) = \E\!\left[ \NSrm_{\delta}(f,x;\H) \right] = \int_{0}^{1} \NSrm_{\delta}(f,z;\H) {\rm d}z.
\end{equation*}

The instantaneous noise sensitivity essentially measures the ability of functions from $\H$ 
to match the behavior of $f$ in a local neighborhood around a given point $x$, and the  
noise sensitivity is simply the average of this over $x$.

We have the following two key lemmas on this definition of noise sensitivity.
Their statements and proofs directly parallel the analysis of unions of intervals 
by \cite{yang:12}, but with a few important changes (particularly in the proof of Lemma~\ref{lem:NSnonInt-piecewise}) 
to generalize the arguments to general piecewise functions.

\begin{lemma}
\label{lem:NSInt-piecewise}
For any $\delta > 0$, 
$\forall f \in \F_{\k}(\H)$, 
 $\NS_{\delta}(f;\H) \le (\k-1) \frac{\delta}{2}$.
\end{lemma}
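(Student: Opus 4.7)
The plan is to exhibit, for each $x$, a concrete choice of $h \in \H_{(x,f(x))}$ that yields a small contribution to $\NSrm_{\delta}(f,x;\H)$, and then integrate over $x$ by charging the total mass to the $\k-1$ breakpoints. Specifically, for $x \in [0,1]$, let $i = i(x)$ be the index of the piece of $f$ containing $x$, so that $t_{i-1} < x \le t_{i}$ and $f(x) = h_{i}(x)$. Then $h_{i} \in \H_{(x,f(x))}$, so $\NSrm_{\delta}(f,x;\H) \le \P(h_{i}(x') \ne f(x') \mid x)$. Crucially, whenever $x'$ also lies in the same piece (i.e., $t_{i-1} < x' \le t_{i}$), we have $f(x') = h_{i}(x')$, so disagreement can only occur when $x'$ leaves the piece $(t_{i-1},t_{i}]$. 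Hence
\[
\NSrm_{\delta}(f,x;\H) \le \P\bigl(x' \le t_{i-1} \mid x\bigr) + \P\bigl(x' > t_{i} \mid x\bigr).
\]

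Next I would integrate. Since $x'$ is uniform on $(x-\delta, x+\delta)$, the first probability is $\max\{0,\delta - (x - t_{i-1})\}/(2\delta)$ and the second is $\max\{0,\delta - (t_{i} - x)\}/(2\delta)$. I would then swap the order of attribution: group the integrand by which breakpoint $t_{j}$ is being crossed rather than by which piece contains $x$. For each fixed breakpoint $t_{j}$ (with $j \in \{1,\ldots,\k-1\}$), the total contribution to $\NS_{\delta}(f;\H)$ is at most
\[
\int_{t_{j}-\delta}^{t_{j}} \frac{\delta - (t_{j} - x)}{2\delta}\,\d x \;+\; \int_{t_{j}}^{t_{j}+\delta} \frac{\delta - (x - t_{j})}{2\delta}\,\d x \;=\; \frac{\delta}{4} + \frac{\delta}{4} \;=\; \frac{\delta}{2},
\]
after substituting $u = t_{j}-x$ (resp.\ $u=x-t_{j}$) and integrating $u/(2\delta)$ from $0$ to $\delta$. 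Summing over the at most $\k-1$ breakpoints gives the claimed bound $(\k-1)\delta/2$. Points farther than $\delta$ from every breakpoint contribute $0$, which is why only the boundaries matter.

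There is essentially no serious obstacle here: the zero-measure crossings property \eqref{eqn:zero-measure-crossings} is not needed for this lemma, and only the definition of $\F_{\k}(\H)$ and the choice of witness $h_{i}$ are used. The only minor care required is handling the endpoints: breakpoints outside $[0,1]$ or breakpoints whose $\delta$-neighborhoods extend outside $[0,1]$ only decrease the per-breakpoint integral, so the bound $\delta/2$ per breakpoint remains valid. Degenerate cases where some $t_{i}=t_{i+1}$ (empty pieces) are also harmless since they reduce the effective number of boundaries; the final $\k-1$ upper bound still holds.
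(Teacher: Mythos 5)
Your proof is correct and takes essentially the same route as the paper: choose $h_{i(x)}$ as the witness, bound $\NSrm_{\delta}$ by the probability that $x'$ crosses one of the two breakpoints bordering the piece containing $x$, and then show each breakpoint contributes at most $\delta/2$ to the integral. The paper phrases the last step as bounding $\P(x \leq t_{j} \leq x')$ and $\P(x' \leq t_{j} \leq x)$ by $\delta/4$ each for a fixed $t_{j}$, while you swap the order of integration and compute the per-breakpoint contribution directly, but these are the same computation; the handling of endpoint effects is also identical.
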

\begin{proof}
For $f \in \F_{\k}(\H)$, let $h_{1},\ldots,h_{\k} \in \H$ and $t_{1},\ldots,t_{\k-1} \in \reals$ be such that 
$f(\cdot) = f\!\left(\cdot;\{h_{i}\}_{i=1}^{\k},\{t_{i}\}_{i=1}^{\k-1}\right)$.
Thus, for any $i \leq \k$ and $x \in (t_{i-1},t_{i}]$, and any $y \in \reals$,
we have $h_{i}(x) = f(x)$, and we would have $h_{i}(y) \neq f(y)$ only if $x$ and $y$ are separated by one of the boundaries $t_{i-1}$ or $t_{i}$;
in particular, $x \leq t_{i} \leq y$ or $y \leq t_{i-1} \leq x$.

Let $x \sim {\rm Uniform}(0,1)$ and (conditioned on $x$) $y \sim {\rm Uniform}(x-\delta, x+\delta)$.
Denoting by $i(x)$ the $i$ with $t_{i-1} < x \leq t_{i}$, we have 
\[
\NSrm_{\delta}(f,x;\H) \leq \P( h_{i(x)}(y) \neq f(y) | x ) \leq \P( x \leq t_{i(x)} \leq y | x ) + \P( y \leq t_{i(x)-1} \leq x | x ),
\]
so that 
\begin{align*}
\NS_{\delta}(f;\H) 
& \leq \P( x \leq t_{i(x)} \leq y ) + \P( y \leq t_{i(x)-1} \leq x ) 
\\ & \leq \sum_{i=1}^{\k-1} \left( \P( x \leq t_{i} \leq y ) + \P( y \leq t_{i} \leq x ) \right),
\end{align*}
where the last inequality uses the facts that $\P( t_{\k} \leq y ) = 0$ and $\P( y \leq t_{0} ) = 0$.

For any fixed $t \in \reals$, 
\begin{equation*}
\P( x \leq t \leq y ) \leq \int_{0}^{\delta}  \P_{y^{\prime} \sim {\rm Uniform}(t-z-\delta,t-z+\delta)}[y^{\prime} \geq t] \d z = \int_0^\delta \frac{\delta-z}{2\delta} \d z = \frac{\delta}{4},
\end{equation*}
noting that, if $t$ is outside $[\delta,1]$, then the probability can only become smaller.
Similarly, any $t \in \reals$ has $\P(y \leq t \leq x) \leq \frac{\delta}{4}$, again noting that 
the probability only becomes smaller if $t$ is outside $[0,1-\delta]$.

Combining these inequalities with the above bound on $\NS_{\delta}(f;\H)$ yields $\NS_{\delta}(f;\H) \leq (\k-1) \frac{\delta}{2}$, as claimed.
%
\end{proof}

\begin{lemma}
\label{lem:NSnonInt-piecewise}
Fix any $\eps \in (0,1/2)$ and let $\delta = \frac{\eps^2}{32\k}$.  Let $f : \X \to \Y$ be any function with 
$\NS_{\delta}(f;\H) \leq (\k-1)\frac{\delta}{2}(1+\frac{\eps}{4})$.  Then $\dist(f,\F_{\k}(\H)) < \eps$.
\end{lemma}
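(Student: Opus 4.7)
The plan is to produce an explicit $\k$-piece function $\tilde{f} \in \F_{\k}(\H)$ with $\dist(f,\tilde{f}) < \eps$, following the ``voting'' philosophy sketched in Section~\ref{subsec:uint}: at a point $x$ where $\NSrm_{\delta}(f,x;\H)$ is small, there is an $h \in \H_{(x,f(x))}$ locally representing $f$, and the zero-measure crossings property~\eqref{eqn:zero-measure-crossings} forces these local witnesses to be the same function on sufficiently overlapping $\delta$-windows. Stitching these witnesses together gives the desired approximant.

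Concretely, set $B = \{x \in [0,1] : \NSrm_{\delta}(f,x;\H) > \eps/16\}$. Since $\NS_{\delta}(f;\H) \leq (\k-1)(\delta/2)(1+\eps/4)$ and $\delta = \eps^{2}/(32\k)$, a direct calculation gives $\NS_{\delta}(f;\H) = O(\eps^{2})$, so Markov's inequality on $\int_{0}^{1} \NSrm_{\delta}(f,x;\H)\d x$ forces $\lambda(B)$ to be of order $\eps$, with room to spare below $\eps$. For each good $x \notin B$, I fix an almost-optimal witness $h_{x} \in \H_{(x,f(x))}$ satisfying $\P(h_{x}(x^{\prime}) \neq f(x^{\prime}) \mid x) \leq \eps/8$.

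The key coherence step: for two good points $x,x^{\prime}$ with $|x-x^{\prime}|<\delta$, the common window $(\max\{x,x^{\prime}\}-\delta,\min\{x,x^{\prime}\}+\delta)$ has length exceeding $\delta$, and within it both $h_{x}$ and $h_{x^{\prime}}$ disagree with $f$ on only an $O(\eps)$-measure set; hence they agree with each other on a set of positive Lebesgue measure, so \eqref{eqn:zero-measure-crossings} forces $h_{x}=h_{x^{\prime}}$. Consequently $x \mapsto h_{x}$ is locally constant on each maximal ``run'' of good points whose consecutive gaps stay below $\delta$. I then define $\tilde{f}$ piecewise by assigning to each such run its common witness $h_{x}$ and extending across the intervening bad regions by the nearest run. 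By construction $\tilde{f}(x)=h_{x}(x)=f(x)$ for all good $x$, and so $\dist(f,\tilde{f}) \leq \lambda(B) < \eps$.

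It remains to bound the number of runs by $\k$. Each transition between distinct witnesses $h_{x_{i}} \neq h_{x_{i+1}}$ is anchored at some crossing location $\tau_{i}$ separating $x_{i}$ and $x_{i+1}$, and a direct adaptation of the computation in the proof of Lemma~\ref{lem:NSInt-piecewise} shows that each such $\tau_{i}$ contributes essentially $\delta/2$ to $\NS_{\delta}(f;\H)$. If the number of runs exceeded $\k$, the total contribution from transitions would strictly exceed $(\k-1)(\delta/2)(1+\eps/4)$, contradicting the hypothesis. The main obstacle is exactly this final counting step: the slack between the baseline $(\k-1)(\delta/2)$ of Lemma~\ref{lem:NSInt-piecewise} and the hypothesized $(\k-1)(\delta/2)(1+\eps/4)$ is only $(\k-1)\delta\eps/8$, so the argument must carefully separate the ``structural'' NS due to genuine transitions from the ``noise'' NS accumulated within a single run, and handle the awkward cases where transitions are closer than $\delta$ apart or where bad regions interleave with transitions. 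Making this accounting tight is precisely where the analysis departs from the unions-of-intervals argument of \cite{yang:12}, and where the zero-measure crossings property plays its essential role in ruling out spurious transitions.
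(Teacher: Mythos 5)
Your witness-coherence idea is sound and pleasant: the observation that two good points within $\delta$ of each other must share the \emph{same} local witness, via zero-measure crossings on an overlapping window of positive measure, captures the same phenomenon the paper encodes through the convolutions $f_{\delta}^{h}$. The bound $\dist(f,\tilde{f}) \leq \lambda(B)$ is also correct as far as it goes, and the Markov estimate gives $\lambda(B) < \eps/3$, with room. The difficulties lie entirely in the part you flag as the ``main obstacle,'' and there are two distinct gaps there, neither of which your sketch closes.

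First, the per-transition lower bound is not a ``direct adaptation of the computation in the proof of Lemma~\ref{lem:NSInt-piecewise}.'' That computation bounds $\P(x \leq t \leq y)$ and $\P(y \leq t \leq x)$ from \emph{above} by $\delta/4$ to prove an upper bound on $\NS_{\delta}$ for $f \in \F_{\k}(\H)$. What you need is a \emph{lower} bound on $\int \NSrm_{\delta}(f,s;\H)\,\d s$ across a gap where the witness changes, for an \emph{arbitrary} $f$. Your good/bad dichotomy only gives $\NSrm_{\delta}(f,s;\H) > \eps/16$ on a gap of length $\geq \delta$, i.e.\ a contribution of order $\eps\delta$, not $\delta/2$. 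The paper obtains the stronger $2\delta\left(\tfrac{1}{4}-\tau^{2}\right)$ per transition from the $\tfrac{1}{2\delta}$-Lipschitz continuity of $\sup_{h}f_{\delta}^{h}$: between two windows where the sup is $\geq 1-\tau$ with different maximizers, the sup must pass through $\tfrac{1}{2}$, and Lipschitzness forces it to stay small over an interval of length $\approx 2\delta$, so $1-\sup_{h}f_{\delta}^{h}$ (a lower bound on $\NSrm_{\delta}$) integrates to $\approx \delta/2$. Reproducing this in your framework would mean re-deriving the convolution argument, not adapting Lemma~\ref{lem:NSInt-piecewise}.

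Second, and more fundamentally, even the correct per-transition lower bound of $2\delta\left(\tfrac{1}{4}-\tau^{2}\right)$ combined with $\NS_{\delta}(f;\H) \leq (\k-1)\tfrac{\delta}{2}\left(1+\tfrac{\eps}{4}\right)$ only bounds the number of transitions $m$ by roughly $(\k-1)\tfrac{1+\eps/4}{1-4\tau^{2}} \leq (\k-1)\left(1+\tfrac{\eps}{2}\right)$, i.e.\ $m$ can strictly exceed $\k-1$. Your $\tilde{f}$ then has more than $\k$ pieces and does not lie in $\F_{\k}(\H)$, so the argument does not prove the lemma. The paper's proof has a second stage you omit entirely: having built $g \in \F_{m+1}(\H)$ with $m \leq (\k-1)(1+\eps/2)$ and $\dist(f,g) \leq \eps/4$, it discards the $\sim (\k-1)\eps/2$ pieces of smallest $\Px$-mass (reassigning each to a neighboring piece), producing $f^{\prime} \in \F_{\k}(\H)$ with $\dist(g,f^{\prime}) < \eps/2$, and then combines via the triangle inequality. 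Without this pruning step, there is no way to land inside $\F_{\k}(\H)$. So the proposal identifies the right philosophy but is missing both the Lipschitz mechanism that makes the transition count work and the piece-pruning argument that brings the count back down to $\k$.
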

\begin{proof}
Let $\k^{\prime} = \lfloor 1+(\k-1)(1+\frac{\eps}{2}) \rfloor$.
We first argue that $f$ is $\frac{\eps}{2}$-close to a function in $\F_{\k^{\prime}}(\H)$,
and then we argue that every function in $\F_{\k^{\prime}}(\H)$ 
is $\frac{\eps}{2}$-close to $\F_{\k}(\H)$.

For each $h \in \H$, consider the function $f_{\delta}^{h} : [0,1] \to [0,1]$ defined by
\[
f_{\delta}^{h}(x) 
= \frac{1}{2\delta} \int_{x-\delta}^{x+\delta} \ind[f(t)=h(t)] \d t.
\]

The function $f_{\delta}^{h}$ is the convolution of 
$t \mapsto \ind[f(t)=h(t)]$ 
and the uniform kernel
$\phi : \reals \to [0,1]$ defined by 
$\phi(x) = \frac{1}{2\delta} \ind[ |x| \le \delta ]$.  
%
Note that, since any distinct $h,h' \in \H$ have $\int_{0}^{1} \ind[h(t)=h'(t)] \d t = 0$ (by the zero-measure crossings assumption \eqref{eqn:zero-measure-crossings}),
the sum (over $h \in \H$) of all $f_{\delta}^{h}(x)$ values is at most $1$.  In particular, at most one
$h \in \H$ has $f_{\delta}^{h}(x) > 1/2$ for any $x$.

Fix $\tau = \frac{4}{\eps} \NS_{\delta}(f;\H)$.  
Since $\NS_{\delta}(f;\H) \leq (\k-1)\frac{\delta}{2}(1+\frac{\eps}{4}) < \frac{\eps^2}{32}$,
we have $\tau < 1/8$.
For each $x$, let $h_{x} = \argmax_{h \in \H} f_{\delta}^{h}(x)$ (breaking ties arbitrarily); 
since the sum (over $h \in \H$) of $f_{\delta}^{h}(x)$ values is finite (bounded by $1$), it follows that the value $\sup_{h \in \H} f_{\delta}^{h}(x)$
is actually realized by some $f_{\delta}^{h}(x)$ with $h \in \H$, so that $h_{x}$ is well-defined.  
Define a function $g^{*} : [0,1] \to \Y\cup\{*\}$ by 
$g^{*}(x) = h_{x}(x)$ if $f_{\delta}^{h_{x}}(x) \geq 1-\tau$,
and $g^{*}(x) = *$ otherwise.  
Next, define a function $g : \reals \to \Y$ by setting, for any $x \in [0,1]$, 
$g(x) = h_{z}(x)$ where $z$ is the largest value in $[0,x]$ for which $g^{*}(z) \neq *$,
and let $g_{x} = h_{z}$; if no such $z$ exists, take $z$ minimal in $[x,1]$ with $g^{*}(z) \neq *$ instead;
if that also does not exist, we can define $g(x)$ and $g_{x}$ arbitrarily, as this case will not come up in our present context.
As we discuss below, $f_{\delta}^{h_{x}}(x) = \sup_{h \in \H} f_{\delta}^{h}(x)$ is continuous in $x$, 
which entails that at least one of these two possible $z$ values will exist if $g^{*}$ is not everywhere equal $*$ in $[0,1]$.
For completeness, also define, for any $x < 0$, $g(x) = g_{0}(x)$, and for any $x > 1$, $g(x) = g_{1}(x)$.
\medskip

Now note that 
\begin{align}
\dist(f,g) & = \Px( x : f(x) \neq g(x) ) 
\leq \Px( x : g^*(x) = * ) + \Px( x : * \neq g^*(x) \neq f(x) ) 
\notag \\ & = \Px\!\left( x : \sup_{h \in \H} f_{\delta}^{h}(x) < 1 - \tau \right) 
\notag \\ & ~~+ \Px\!\left( x : \H \setminus \H_{(x,f(x))} \neq \emptyset, \sup_{h \in \H \setminus \H_{(x,f(x))} } f_{\delta}^{h}(x) \geq 1 - \tau \right). \label{eqn:dist-f-g-bound}
\end{align}
Because $\tau < 1/2$, at most one $h$ can have $f_{\delta}^{h}(x) \geq 1-\tau$ (as discussed above), 
so that
if the event $\sup_{h \in \H} f_{\delta}^{h}(x) < 1-\tau$ holds or the event $\H \setminus \H_{(x,f(x))} \neq \emptyset$ and $\sup_{h \in \H \setminus \H_{(x,f(x))}} f_{\delta}^{h}(x) \geq 1-\tau$ holds,
then either way we would have 
$\sup_{h \in \H_{(x,f(x))}} f_{\delta}^{h}(s) < 1-\tau$ (or $\H_{(x,f(x))} = \emptyset$); 
thus, since the two events implying this are disjoint, the sum of probabilities in \eqref{eqn:dist-f-g-bound} is at most 
\[
\Px\!\left( x : \H_{(x,f(x))} = \emptyset \text{ or } \sup_{h \in \H_{(x,f(x))}} f_{\delta}^{h}(x) < 1-\tau \right).
\]
Now observe that $\NSrm_{\delta}(f,x;\H) = 1 - \sup_{h \in \H_{(x,f(x))}} f_{\delta}^{h}(x)$ if $\H_{(x,f(x))} \neq \emptyset$,
and $\NSrm_{\delta}(f,x;\H) = 1$ if $\H_{(x,f(x))} = \emptyset$.
Together with Markov's inequality, this implies that  
\begin{multline*}
\Px\!\left( x : \H_{(x,f(x))} = \emptyset \text{ or } \sup_{h \in \H_{(x,f(x))}} f_{\delta}^{h}(x) < 1-\tau \right)
\\ = \Px( x : \NSrm_{\delta}(f,x;\H) > \tau )
 < \frac{\NS_{\delta}(f;\H)}{\tau} = \frac{\eps}{4}.
\end{multline*}
Thus, we have established that $\dist(f,g) \leq \frac{\eps}{4}$.  
\medskip

Next we show that $g \in \F_{m+1}(\H)$ for some nonnegative integer $m \le$\break $(\k-1) (1 + \frac{\eps}{2})$.
Since each $f_{\delta}^{h}$ is the convolution of $\ind[f(\cdot)=h(\cdot)]$ with a uniform kernel of width $2\delta$,
it is $\frac{1}{2\delta}$-Lipschitz smooth.  Also recall that 
$\tau < 1/2$, and the sum of all $f_{\delta}^{h}(x)$ values for a given $x$ is at most $1$.
Thus, if we consider any two points $x,z \in [0,1]$ with $g^{*}(x) \neq *$, $g^{*}(z) \neq *$,
$x < z$, and $h_{x} \neq h_{z}$, 
then it must be that $|x-z| \geq 2\delta (1-2\tau)$, and that 
there is at least one point $t \in (x,z)$ with $\sup_{h \in \H} f_{\delta}^{h}(t) = 1/2$.
Since each $f_{\delta}^{h}$ is $\frac{1}{2\delta}$-Lipschitz, so is $\sup_{h \in \H} f_{\delta}^{h}$,
so that we have 
\[
\int_{t - 2\delta (\frac{1}{2}-\tau)}^{t + 2\delta (\frac{1}{2}-\tau)} \sup_{h \in \H} f_{\delta}^{h}(s) \d s
\leq 2 \int_{0}^{2\delta (\frac{1}{2}-\tau)} \left(\frac{1}{2} + \frac{s}{2\delta}\right) \d s
= 2 \delta \left(\frac{1}{2}-\tau\right) \left( \frac{3}{2} - \tau\right).
\]
Therefore, 
\begin{align*}
& \int_{x}^{z} \NSrm_{\delta}(f,s;\H) \d s
\geq \int_{x}^{z} \left(1 - \sup_{h \in \H} f_{\delta}^{h}(s) \right) \d s
\\ & \geq (z-x) - 2 \delta \left( \frac{1}{2}-\tau \right) \left(\frac{3}{2} - \tau \right)
\geq 2 \delta \left( 1 - 2\tau\right) - 2 \delta\left( \frac{1}{2}-\tau\right) \left(\frac{3}{2} - \tau\right)
\\ & = 2 \delta \left(\frac{1}{2}-\tau \right) \left( \frac{1}{2} + \tau \right)
= 2 \delta \left( \frac{1}{4} - \tau^{2} \right).
\end{align*}
Since any $x$ with $g^{*}(x) \neq *$ has $g(x) = g^{*}(x)$,
and since $g_{t}$ is extrapolated from the left in $*$ regions of $g^{*}$ 
(aside from the case of an interval of $*$ values including $0$, where it is extrapolated from the right),
for every point $x > 0$ for which there exist arbitrarily close points $y$
having $g_{y} \neq g_{x}$, we must have that $g^{*}(x) \neq *$, 
and that there is a point $z < x$ such that $g^{*}(z) \neq *$ 
and such that every $t \in (z,x)$ has $g_{t} = g_{z} \neq g_{x}$.
Combined with the above, we have that $\int_{z}^{x} \NSrm_{\delta}(f,s;\H) \d s \geq 2 \delta( \frac{1}{4} - \tau^{2} )$.
Altogether, if $g$ has $m$ such ``transition'' points, then 
\[
\NS_{\delta}(f;\H) 
= \int_{0}^{1} \NSrm_{\delta}(f,s;\H) \d s
\geq m 2 \delta \left( \frac{1}{4} - \tau^{2} \right).
\]
By assumption, $\NS_{\delta}(f;\H) \leq (\k-1) \frac{\delta}{2} (1 + \frac{\eps}{4})$.
Therefore, we must have
\[
m \leq \frac{(\k-1) \delta (1 + \frac{\eps}{4})}{4 \delta (\frac{1}{4} - \tau^{2} )} 
\leq (\k-1) \frac{1+\frac{\eps}{4}}{1 - 4\tau^{2}} 
\leq (\k-1) \frac{1+\frac{\eps}{4}}{(1-2\tau)^{2}} 
\leq (\k-1) \left(1+\frac{\eps}{2}\right),
\]
since $\tau < 1/8$.
In particular, this means $g \in \F_{m+1}(\H)$ for an $m \leq (\k-1)(1+\frac{\eps}{2})$, as claimed.
\medskip


As a second step in the proof, we show that for any nonnegative integer $m \leq (\k-1) (1+\frac{\eps}{2})$, any function $g^{\prime} \in \F_{m+1}(\H)$
is $\frac{\eps}{2}$-close to a function in $\F_{\k}(\H)$.
Let $t_{1},\ldots,t_{m} \in \reals$ with $t_{1} \leq \cdots \leq t_{m}$, and $h_{1},\ldots,h_{m+1} \in \H$, be such that $g^{\prime}(\cdot) = f(\cdot ; \{h_{i}\}_{i=1}^{m+1},\{t_{i}\}_{i=1}^{m} )$.
For each $i \in \{1,\ldots,m+1\}$, let $\ell_{i} = \Px( (t_{i-1},t_{i}] )$ denote the probability mass in the $i^{{\rm th}}$ region.
In particular, $\ell_{1} + \cdots + \ell_{m+1} = 1$, so there must be a set $S \subseteq \{1,\ldots,m+1\}$ with $|S| = (m+1) - \k \leq (\k-1)\frac{\eps}{2}$ 
such that 
\[
\sum_{i \in S} \ell_{i} \leq \frac{(m+1) - \k}{(m+1)} \leq \frac{(\k-1)\eps/2}{1+(\k-1)(1 + \eps/2)} < \frac{\eps}{2}.
\]
Define a function $f^{\prime} : \X \to \Y$ such that, 
for each $i \in \{1,\ldots,m+1\}$ and $x \in (t_{i-1},t_{i}]$,
we set $f^{\prime}(x) = h_{j}(x)$ 
for the $j \in \{1,\ldots,m+1\} \setminus S$ of smallest $|i-j|$ (breaking ties to favor smaller $j$). 
The function $f^{\prime}$ is then contained in $\F_{\k}(\H)$, and has $f^{\prime}(x) = g^{\prime}(x)$ for every $x \in (t_{i-1},t_{i}]$ with $i \notin S$, 
and hence $\distance(g^{\prime},f^{\prime}) < \frac{\eps}{2}$.
This completes the proof, since taking $g^{\prime} = g$ yields 
$\distance(f,f^{\prime}) \leq \distance(f,g) + \distance(g,f^{\prime}) < \eps$.  
\end{proof}

\section{Active Testing}
\label{sec:active}

We can use the above lemmas to construct an active tester for $\F_{\k}(\H)$ as follows.
Fix any $\eps \in (0,1/2)$.
Let $m = \left\lceil \frac{c}{\eps^{4}} \right\rceil$, 
$\ell = \left\lceil \frac{c'\dim}{\eps^{4}}\ln\!\left(\frac{c''}{\eps}\right) \right\rceil$, 
$\delta = \frac{\eps^{2}}{32 \k}$, and 
$s = m + \left\lceil \max\!\left\{ \frac{2\ell}{\delta}, \frac{8}{\delta} \ln(12 m) \right\} \right\rceil$, 
for appropriate choices of numerical constants $c,c',c'' \geq 1$ from the analysis below.
Sample $s$ points $x_{1}^{\prime\prime},\ldots,x_{s}^{\prime\prime}$ independent ${\rm Uniform}(0,1)$.
Define $x_{i} = x_{i}^{\prime\prime}$ for each $i \leq m$.
For each $i \leq m$, denoting $t_{i0} = m$, for each $j \in \{1,\ldots,\ell\}$, 
let $t_{ij} = \min\{ t \in \{t_{i(j-1)}+1,\ldots,s\} : x_{t}^{\prime\prime} \in (x_{i} - \delta, x_{i}+\delta) \}$ if such a value exists 
(if it does not exist, the tester may return any response, as this is a failure case),
and define $x_{ij}^{\prime} = x_{t_{ij}}^{\prime\prime}$.
Thus, the random variables $x_{1},\ldots,x_{m}$ are iid ${\rm Uniform}(0,1)$ and, given $x_{i}$, 
the random variables $x_{i1}^{\prime},\ldots,x_{i\ell}^{\prime}$ are conditionally iid 
${\rm Uniform}( (x_{i}-\delta, x_{i}+\delta) \cap [0,1])$ (given $x_{i}$ and the event that they exist).
The tester requests the $f$ values for all 
$m (\ell+1)$ of these points $x_{i}$, $x_{ij}^{\prime}$, $i \in \{1,\ldots,m\}$, $j \in \{1,\ldots,\ell\}$.  
It then calculates, for each $i \leq m$,
\[
\widehat{\NSrm}_{\delta}(f,x_{i};\H) = \min_{h \in \H_{(x_{i},f(x_{i}))}} \frac{1}{\ell} \sum_{j=1}^{\ell} \ind[ h(x_{ij}^{\prime}) \neq f(x_{ij}^{\prime}) ],
\]
or $\widehat{\NSrm}_{\delta}(f,x_{i};\H) = 1$ in the event that $\H_{(x_{i},f(x_{i}))}$ is empty.
Then define 
\[
\widehat{\NS}_{\delta}(f;\H) = \frac{1}{m} \sum_{i=1}^{m} \widehat{\NSrm}_{\delta}(f,x_{i};\H).
\]

\begin{lemma}
\label{lem:NS-empirical}
If $\k \geq 80/\eps$, 
then for appropriate choices of numerical constants $c,c',c''$,
for any measurable function $f : \X \to \Y$, 
with probability at least $2/3$, all of the above $x_{ij}^{\prime}$ points exist, 
and the following two claims hold:
\begin{align*}
\NS_{\delta}(f;\H) \leq (\k-1) \frac{\delta}{2} & \implies \widehat{\NS}_{\delta}(f;\H) \leq (\k-1) \frac{\delta}{2} \left( 1 + \frac{\eps}{8} \right)
\\ \NS_{\delta}(f;\H) > (\k-1) \frac{\delta}{2} \left( 1 + \frac{\eps}{4} \right) & \implies \widehat{\NS}_{\delta}(f;\H) > (\k-1) \frac{\delta}{2} \left( 1 + \frac{\eps}{8} \right).
\end{align*}
\end{lemma}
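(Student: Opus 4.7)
The plan is to decompose
\[
\widehat{\NS}_\delta(f;\H) - \NS_\delta(f;\H) = \bigl( \widehat{\NS}_\delta(f;\H) - \widetilde{\NS} \bigr) + \bigl( \widetilde{\NS} - \NS_\delta(f;\H) \bigr),
\]
where $\widetilde{\NS} = \frac{1}{m}\sum_{i=1}^{m} \NSrm_\delta(f, x_i; \H)$, and to bound each piece together with a separate check that all $x_{ij}^{\prime}$ exist. The implicit target tolerance in the two implications is $(\k-1)\delta\eps/16 = \Theta(\eps^{3})$, and the choices $m = \Theta(1/\eps^{4})$ and $\ell = \Theta(\dim\log(1/\eps)/\eps^{4})$ are calibrated so that each error source fits inside this window.

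The existence of the neighbor points is a routine sampling argument. For each fixed $x_{i}$, the samples $x^{\prime\prime}_{m+1},\ldots,x^{\prime\prime}_{s}$ are iid ${\rm Uniform}(0,1)$, and each lies in $(x_{i}-\delta,x_{i}+\delta)\cap[0,1]$ with probability at least $\delta$. Since $s-m\geq\max(2\ell/\delta,\,8\log(12m)/\delta)$, a multiplicative Chernoff bound delivers at least $\ell$ such neighbors with probability $\geq 1-1/(12m)$, and a union bound over $i\leq m$ handles all indices simultaneously. The Monte Carlo error $\widetilde{\NS}-\NS_\delta(f;\H)$ is an average of $m$ iid bounded random variables whose variance is at most their mean, so Bernstein's inequality yields $|\widetilde{\NS}-\NS_\delta(f;\H)|=O\bigl(\sqrt{\NS_\delta(f;\H)/m}+1/m\bigr)$ with probability $\geq 11/12$. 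This Bernstein rate is essential: a Hoeffding bound would only give $O(\sqrt{1/m})=O(\eps^{2})$, too coarse for the $\eps^{3}$ target, whereas Bernstein yields $O(\eps^{3})$ precisely when $\NS_\delta(f;\H)$ sits near the threshold $(\k-1)\delta/2=\Theta(\eps^{2})$, and remains small compared to the gap $\NS_\delta(f;\H)-(\k-1)\delta/2(1+\eps/8)$ in the far case.

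The main obstacle is controlling the inner VC estimation error. Conditional on $x_{i}$, the points $x_{ij}^{\prime}$ are iid uniform on $(x_{i}-\delta,x_{i}+\delta)\cap[0,1]$, and the binary class
\[
\mathcal{G}_{i} = \bigl\{ x^{\prime} \mapsto \ind[h(x^{\prime})\neq f(x^{\prime})] : h \in \H_{(x_{i},f(x_{i}))} \bigr\}
\]
is (up to complementation) the projection of the graph collection of $\H$ onto $\{(x^{\prime},f(x^{\prime})):x^{\prime}\in\X\}$, and hence has VC dimension at most $\dim$. A Vapnik-style Bernstein-type uniform convergence bound on $\mathcal{G}_{i}$ then says that, with probability $\geq 1-1/(12m)$, every $h\in\H_{(x_{i},f(x_{i}))}$ has empirical disagreement within $O\bigl(\sqrt{\dim\log(\ell m)\,p_{h}/\ell}+\dim\log(\ell m)/\ell\bigr)$ of its true expectation $p_{h}=\P(h(x^{\prime})\neq f(x^{\prime})\mid x_{i})$; taking the infimum over $h$ gives the analogous bound for $|\widehat{\NSrm}_\delta(f,x_{i};\H)-\NSrm_\delta(f,x_{i};\H)|$ with $p_{h}$ replaced by $\NSrm_\delta(f,x_{i};\H)$. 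Averaging over $i$ and using Cauchy--Schwarz to convert $\frac{1}{m}\sum_{i}\sqrt{\NSrm_\delta(f,x_{i};\H)}$ into $\sqrt{\widetilde{\NS}}$ produces $|\widehat{\NS}_\delta(f;\H)-\widetilde{\NS}|=O\bigl(\sqrt{\dim\log(\ell m)\widetilde{\NS}/\ell}+\dim\log(\ell m)/\ell\bigr)$. Plugging in $\ell=\Theta(\dim\log(1/\eps)/\eps^{4})$ makes this $O\bigl(\sqrt{\widetilde{\NS}\eps^{4}}+\eps^{4}\bigr)$ with a hidden constant that can be shrunk by enlarging $c'$ and $c''$. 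A union bound over the three high-probability events, combined with a short case analysis separating the close regime ($\NS_\delta(f;\H)\leq(\k-1)\delta/2$) from the far regime ($\NS_\delta(f;\H)>(\k-1)\delta/2(1+\eps/4)$), verifies both implications; the hypothesis $\k\geq 80/\eps$ is used only to ensure $(\k-1)/\k\approx 1$ so that the threshold stays comparable to $\eps^{2}/64$.
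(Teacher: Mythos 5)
Your decomposition into inner (VC, conditional on $x_i$) and outer (Monte Carlo over $x_i$) error, with relative/Bernstein-type deviations in both layers, Jensen or Cauchy--Schwarz to pull the square root out of the $i$-average, and a separate Chernoff argument for the existence of the $x_{ij}'$, is exactly the paper's plan, and your observation that the variance-aware rate is essential (Hoeffding at $\Theta(\eps^{2})$ is too coarse against the $\Theta(\eps^{3})$ window) is correct and well-placed. However, there are two places where the sketch as written does not quite close.

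\textbf{Boundary truncation.} You correctly note that the VC bound controls $\widehat{\NSrm}_{\delta}(f,x_{i};\H)$ against $p_{h} = \P(h(x')\neq f(x')\mid x_{i})$ under the \emph{truncated} law ${\rm Uniform}((x_{i}-\delta,x_{i}+\delta)\cap[0,1])$, but then you identify $\inf_{h}p_{h}$ with $\NSrm_{\delta}(f,x_{i};\H)$, which by definition uses the untruncated ${\rm Uniform}(x_{i}-\delta,x_{i}+\delta)$. These match only when $x_{i}\in(\delta,1-\delta)$; near the endpoints the two differ, and each such $i$ can contribute up to $1$ to $\widehat{\NS}_{\delta}(f;\H)-\widetilde{\NS}$. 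Since $\P(x_{i}\notin(\delta,1-\delta))=2\delta \approx \eps^{3}/(16\k\cdot\eps/\eps)\sim\eps^{3}$ is the \emph{same} order as the target window $(\k-1)\tfrac{\delta}{2}\tfrac{\eps}{8}$, this term cannot be silently absorbed: you need an explicit Chernoff bound on $\tfrac{1}{m}\sum_{i}\ind[x_{i}\notin(\delta,1-\delta)]$ and a check that the resulting additive error is a small enough fraction of $(\k-1)\tfrac{\delta}{2}\cdot\eps$. The paper carries this term through both directions explicitly.

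\textbf{Taking the infimum in the lower deviation.} For the upper bound, passing from $\hat{p}_{h}\leq p_{h}+\sqrt{p_{h}A}+A$ for all $h\in\H_{(x_{i},f(x_{i}))}$ to the analogous statement about infima is fine because $p\mapsto p+\sqrt{pA}+A$ is increasing. For the lower bound $\hat{p}_{h}\geq p_{h}-\sqrt{p_{h}A}$, the map $p\mapsto p-\sqrt{pA}$ is \emph{not} monotone near $0$, so $\inf_{h}[\,p_{h}-\sqrt{p_{h}A}\,]$ need not equal $\NSrm_{\delta}-\sqrt{\NSrm_{\delta}A}$. The paper fixes this by observing the left-hand side is nonnegative, replacing the right-hand side with $\max\{p_{h}-\sqrt{p_{h}A},0\}$, which is nondecreasing, and then dropping the max. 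Your sketch states the infimum step for both directions without flagging this; the conclusion is right but the justification needs the extra line.

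Neither gap requires a new idea, but both are load-bearing: the boundary term eats a constant fraction of the tolerance budget, and the nonmonotone infimum would, taken literally, give a false inequality for small $\NSrm_{\delta}$.
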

\begin{proof}
For each $i \leq m$, any $t \in \{m+1,\ldots,s\}$ has conditional probability (given $x_{i}$) 
at least $\delta$ of having $x_{t}^{\prime\prime} \in (x_{i}-\delta,x_{i}+\delta)$.
Therefore, by a Chernoff bound (applied under the conditional distribution given $x_{i}$) and the law of total probability, 
with probability at least $1 - \exp\{ - \delta (s-m) / 8 \}$, the number of $t \in \{m+1,\ldots,s\}$ 
with $x_{t}^{\prime\prime} \in (x_{i}-\delta,x_{i}+\delta)$ is at least $(1/2) \delta (s-m) \geq \ell$.
By the union bound, this holds simultaneously for all $i \leq m$ with probability at least $1 - m \exp\{ - \delta (s-m) / 8 \} \geq 11/12$,
and on this event all of the $x_{ij}^{\prime}$ points exist.

The VC dimension of the collection of sets $\{ \{(x,h(x)) : x \in \X \} : h \in \H \}$ 
is $\dim$ (by definition of $\dim$).
Therefore, denoting
\begin{equation*}
A_{\ell,m} = 4 \frac{\dim \ln(2 e \ell / \dim) + \ln(96 m)}{\ell},
\end{equation*}
applying standard VC ``relative deviation'' bounds \cite{vapnik:74} (see Theorem 5.1 of \cite{boucheron:05}), 
to obtain a concentration inequality for the frequency of $(x_{ij}^{\prime},f(x_{ij}^{\prime})) \in \{ (x,h(x)) : x \in \X \}$,
holding for all $h \in \H$, we obtain that, 
for each $i \leq m$, with probability at least $1 - 1 / (12 m)$, 
if the points $x_{ij}^{\prime}$ exist, then 
every $h \in \H$ has
\begin{multline}
\label{eqn:vc-relative-ub}
\frac{1}{\ell} \sum_{j=1}^{\ell} \ind[ h(x_{ij}^{\prime}) \neq f(x_{ij}^{\prime}) ] 
\leq \P( h(x_{i1}^{\prime}) \neq f(x_{i1}^{\prime}) | x_{i} )
\\ + \sqrt{\P( h(x_{i1}^{\prime}) \neq f(x_{i1}^{\prime}) | x_{i} ) A_{\ell,m}} + A_{\ell,m}
\end{multline}
and
\begin{multline}
\label{eqn:vc-relative-lb}
\frac{1}{\ell} \sum_{j=1}^{\ell} \ind[ h(x_{ij}^{\prime}) \neq f(x_{ij}^{\prime}) ] 
\geq \P( h(x_{i1}^{\prime}) \neq f(x_{i1}^{\prime}) | x_{i} )
\\ - \sqrt{\P( h(x_{i1}^{\prime}) \neq f(x_{i1}^{\prime}) | x_{i} ) A_{\ell,m}}.
\end{multline}
The union bound implies this is true simultaneously for all $i \leq m$ with probability at least $11/12$.

Furthermore, for any $x_{i} \in (\delta,1-\delta)$, 
the conditional distribution of $x_{i1}^{\prime}$ given $x_{i}$ is ${\rm Uniform}(x_{i}-\delta,x_{i}+\delta)$, 
so that 
\begin{equation*}
\NSrm_{\delta}(f,x_{i};\H) = \inf_{h \in \H_{(x_{i},f(x_{i}))}} \P( h(x_{i1}^{\prime}) \neq f(x_{i1}^{\prime}) | x_{i} )
\end{equation*}
in the case $\H_{(x_{i},f(x_{i}))} \neq \emptyset$.
Thus, on the above events, for each $i \leq m$ with $x_{i} \in (\delta, 1-\delta)$ and $\H_{(x_{i},f(x_{i}))} \neq \emptyset$, 
taking the infimum over $h \in \H_{(x_{i},f(x_{i}))}$ on both sides of \eqref{eqn:vc-relative-ub} yields 
\begin{equation}
\label{eqn:NSxi-ub}
\widehat{\NSrm}_{\delta}(f,x_{i};\H) \leq \NSrm_{\delta}(f,x_{i};\H) + \sqrt{\NSrm_{\delta}(f,x_{i};\H) A_{\ell,m}} + A_{\ell,m}.
\end{equation}
For the other inequality, note that the left hand side of \eqref{eqn:vc-relative-lb} is nonnegative, so that the inequality remains valid if we include a maximum with $0$ on the right hand side.
Then noting that $x \mapsto x - \max\!\left\{\sqrt{x A_{\ell,m}}, 0\right\}$ is nondecreasing on $[0,1]$, 
we obtain, on the above events, for each $i \leq m$ with $x_{i} \in (\delta, 1-\delta)$ and $\H_{(x_{i},f(x_{i}))} \neq \emptyset$, 
\begin{equation}
\label{eqn:NSxi-lb}
\widehat{\NSrm}_{\delta}(f,x_{i};\H) \geq \NSrm_{\delta}(f,x_{i};\H) - \sqrt{\NSrm_{\delta}(f,x_{i};\H) A_{\ell,m}}.
\end{equation}
Both of these inequalities are trivially also satisfied in the case $\H_{(x_{i},f(x_{i}))} = \emptyset$.

Furthermore, since $\k \geq 80/\eps$, we have $2\delta \leq \frac{\eps^{3}}{16 \cdot 80}$, 
so that a Chernoff bound implies that, for an appropriately large choice of the numerical constant $c$, 
with probability at least $11/12$, 
\begin{equation*}
\frac{1}{m} \sum_{i=1}^{m} \ind[ x_{i} \notin (\delta,1-\delta) ] \leq \frac{\eps^{3}}{16 \cdot 65}.
\end{equation*}
Furthermore, note that since $\k \geq 80/\eps$, we have 
$\frac{\eps^{3}}{16 \cdot 65} = (\k-1) \frac{\delta}{2} \frac{\k}{\k-1} \frac{64 \eps}{16 \cdot 65} < (\k-1) \frac{\delta}{2} \frac{\eps}{16}$,
so that on the above event, 
\begin{equation}
\label{eqn:delta-ends-frac}
\frac{1}{m} \sum_{i=1}^{m} \ind[ x_{i} \notin (\delta,1-\delta) ] < (\k-1) \frac{\delta}{2} \frac{\eps}{16}.
\end{equation}

Additionally, since $\k \geq 80/\eps$, we have $(\k-1) \frac{\delta}{2} > \frac{\eps^{2}}{65}$, 
so that $m > \frac{c/65}{(\k-1) (\delta/2) \eps^{2}}$, which clearly also means $m > \frac{c/65}{(\k-1)(\delta/2)(1+\eps/4)\eps^{2}}$.
Therefore, another application of a Chernoff bound implies that, 
for an appropriately large choice of the numerical constant $c$, 
with probability at least $11/12$, 
\begin{equation}
\label{eqn:NSemp-ub}
\NS_{\delta}(f;\H) \leq (\k-1) \frac{\delta}{2} \implies \frac{1}{m} \sum_{i=1}^{m} \NSrm_{\delta}(f,x_{i};\H) \leq (\k-1) \frac{\delta}{2} \left( 1 + \frac{\eps}{33} \right)
\end{equation}
and
\begin{align}
& \NS_{\delta}(f;\H) > (\k-1) \frac{\delta}{2} \left( 1 + \frac{\eps}{4} \right) \notag
\\ & \!\implies \frac{1}{m} \sum_{i=1}^{m} \NSrm_{\delta}(f,x_{i};\H) 
> (\k\!-\!1) \frac{\delta}{2} \!\left( 1 + \frac{\eps}{4} \right) \!\left( 1 - \frac{\eps}{33} \right)
\geq (\k\!-\!1) \frac{\delta}{2} \!\left( 1 + \frac{7}{33}\eps \right)\!. \label{eqn:NSemp-lb}
\end{align}

The union bound implies that all four of the above events hold simultaneously with probability at least $2/3$.
Let us suppose all of these events indeed hold.
In this case, if $\NS_{\delta}(f;\H) \leq (\k-1) \frac{\delta}{2}$, then \eqref{eqn:NSxi-ub} and Jensen's inequality imply 
\begin{align*}
& \widehat{\NS}_{\delta}(f;\H) 
\\ & \leq \frac{1}{m} \sum_{i=1}^{m} \!\left( \NSrm_{\delta}(f,x_{i};\H) \!+\! \sqrt{\NSrm_{\delta}(f,x_{i};\H) A_{\ell,m}} \!+\! A_{\ell,m} \right) \!+\! \frac{1}{m} \sum_{i=1}^{m} \ind[ x_{i} \!\notin\! (\delta,1-\delta) ] 
\\ & \leq \left( \frac{1}{m} \sum_{i=1}^{m} \NSrm_{\delta}(f,x_{i};\H) \right) + \sqrt{\left( \frac{1}{m} \sum_{i=1}^{m} \NSrm_{\delta}(f,x_{i};\H) \right) A_{\ell,m}} 
\\ & \phantom{aaa} + A_{\ell,m} + \frac{1}{m} \sum_{i=1}^{m} \ind[ x_{i} \notin (\delta,1-\delta) ],
\end{align*}
and \eqref{eqn:delta-ends-frac} and \eqref{eqn:NSemp-ub} imply this is at most
\begin{equation*}
(\k-1) \frac{\delta}{2} \left( 1 + \frac{\eps}{33} \right) + \sqrt{(\k-1) \frac{\delta}{2} \left( 1 + \frac{\eps}{33} \right) A_{\ell,m}} + A_{\ell,m} + (\k-1) \frac{\delta}{2} \frac{\eps}{16}.
\end{equation*}
For appropriately large choices of the numerical constants $c',c''$, 
we can obtain $A_{\ell,m} \leq \frac{\eps^{4}}{65 \cdot 68 \cdot 33} \leq (\k-1) \frac{\delta}{2} \frac{\eps^{2}}{68 \cdot 33}$, 
so that the above expression is at most
\begin{equation*}
(\k-1) \frac{\delta}{2} \left( 1 + \frac{\eps}{33} + \sqrt{\left( 1 + \frac{\eps}{33} \right) \frac{\eps^{2}}{68 \cdot 33}} + \frac{\eps^{2}}{68 \cdot 33} + \frac{\eps}{16} \right)
\leq (\k-1) \frac{\delta}{2} \left( 1 + \frac{\eps}{8} \right),
\end{equation*}
which verifies the first claimed implication from the lemma.
On the other hand, for the second implication, if $\NS_{\delta}(f;\H) > (\k-1) \frac{\delta}{2} \left( 1 + \frac{\eps}{4} \right)$, 
then \eqref{eqn:NSxi-lb} and Jensen's inequality imply 
\begin{align*}
& \widehat{\NS}_{\delta}(f;\H) 
\\ & \geq \frac{1}{m} \sum_{i=1}^{m} \left( \NSrm_{\delta}(f,x_{i};\H) - \sqrt{\NSrm_{\delta}(f,x_{i};\H) A_{\ell,m}} \right) - \frac{1}{m} \sum_{i=1}^{m} \ind[ x_{i} \notin (\delta,1-\delta) ] 
\\ & \geq \left( \frac{1}{m} \sum_{i=1}^{m} \NSrm_{\delta}(f,x_{i};\H) \right) - \sqrt{\left( \frac{1}{m} \sum_{i=1}^{m} \NSrm_{\delta}(f,x_{i};\H) \right) A_{\ell,m}} 
\\ & \phantom{aaa}- \frac{1}{m} \sum_{i=1}^{m} \ind[ x_{i} \notin (\delta,1-\delta) ],
\end{align*}
and \eqref{eqn:delta-ends-frac} implies this is greater than 
\begin{equation*}
\left( \frac{1}{m} \sum_{i=1}^{m} \NSrm_{\delta}(f,x_{i};\H) \right) - \sqrt{\left( \frac{1}{m} \sum_{i=1}^{m} \NSrm_{\delta}(f,x_{i};\H) \right) A_{\ell,m}} - (\k-1) \frac{\delta}{2} \frac{\eps}{16}.
\end{equation*}
Now since our choices of constants $c,c',c''$ above imply 
$A_{\ell,m} \leq (\k-1) \frac{\delta}{2} \frac{\eps^{2}}{68 \cdot 33} \leq (\k-1) \frac{\delta}{2} \left( 1 + \frac{7}{33} \eps \right)$, 
and since $x \mapsto x - \sqrt{x A_{\ell,m}}$ is increasing for $x \geq A_{\ell,m}$, 
\eqref{eqn:NSemp-lb} implies the above expression is greater than 
\begin{align*}
& (\k-1) \frac{\delta}{2} \left( 1 + \frac{7}{33} \eps \right) - \sqrt{(\k-1) \frac{\delta}{2} \left( 1 + \frac{7}{33} \eps \right) A_{\ell,m}} - (\k-1) \frac{\delta}{2} \frac{\eps}{16}
\\ & \geq (\k-1) \frac{\delta}{2} \left( 1 + \frac{7}{33} \eps \right) - \sqrt{(\k-1) \frac{\delta}{2} \left( 1 + \frac{7}{33} \eps \right) (\k-1) \frac{\delta}{2} \frac{\eps^{2}}{68 \cdot 33}} - (\k-1) \frac{\delta}{2} \frac{\eps}{16}
\\ & = (\k-1) \frac{\delta}{2} \left( 1 + \frac{7}{33} \eps - \sqrt{\left( 1 + \frac{7}{33} \eps \right) \frac{\eps^{2}}{68 \cdot 33}} - \frac{\eps}{16} \right)
> (\k-1) \frac{\delta}{2} \left( 1 + \frac{\eps}{8} \right).
\end{align*}
This verifies the second claimed implication from the lemma, and thus completes the proof.
\end{proof}

We are now ready to finish describing the tester and prove its correctness.

\begin{theorem}
\label{thm:active-piecewise-function-tester}
If $\k \geq 80/\eps$,
then the procedure that outputs \accept~if $\widehat{\NS}_{\delta}(f;\H)$ $\leq (\k-1)\frac{\delta}{2}(1+\frac{\eps}{8})$,
and otherwise outputs \reject, 
is an $s$-sample $q$-query $\eps$-tester for the class $\F_{\k}(\H)$ of $\k$-piecewise $\H$ functions under the distribution ${\rm Uniform}(0,1)$,
for $s$ as defined above, and for 
$q = m (\ell+1)$ (where $m$ and $\ell$ are as defined above).
\end{theorem}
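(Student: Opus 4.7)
The plan is to combine the three preceding lemmas into a straightforward case analysis on whether $\target \in \F_{\k}(\H)$ or $\dist(\target,\F_{\k}(\H)) \geq \eps$, using the threshold $(\k-1)\frac{\delta}{2}(1+\frac{\eps}{8})$ as a ``middle'' value that cleanly separates the two regimes once empirical estimation error has been absorbed. The hypothesis $\k \geq 80/\eps$ is exactly what Lemma~\ref{lem:NS-empirical} requires, so its conclusions are available.

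First I would handle the completeness case: suppose $\target \in \F_{\k}(\H)$. By Lemma~\ref{lem:NSInt-piecewise}, $\NS_{\delta}(\target;\H) \leq (\k-1)\frac{\delta}{2}$, which is the hypothesis of the first implication in Lemma~\ref{lem:NS-empirical}. That lemma then yields $\widehat{\NS}_{\delta}(\target;\H) \leq (\k-1)\frac{\delta}{2}(1+\frac{\eps}{8})$ with probability at least $2/3$, so the tester outputs \accept~with probability at least $2/3$, as required.

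Next I would handle the soundness case: suppose $\dist(\target,\F_{\k}(\H)) \geq \eps$. By the contrapositive of Lemma~\ref{lem:NSnonInt-piecewise}, applied with the given $\delta = \eps^{2}/(32\k)$, we get $\NS_{\delta}(\target;\H) > (\k-1)\frac{\delta}{2}(1+\frac{\eps}{4})$. This is the hypothesis of the second implication in Lemma~\ref{lem:NS-empirical}, which then gives $\widehat{\NS}_{\delta}(\target;\H) > (\k-1)\frac{\delta}{2}(1+\frac{\eps}{8})$ with probability at least $2/3$, so the tester outputs \reject~with probability at least $2/3$.

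Finally I would record the resource bounds: the procedure draws $s$ unlabeled examples and queries $\target$ only at the $m$ points $x_{i}$ together with the $m \ell$ neighbor points $x_{ij}^{\prime}$, for a total of $q = m(\ell+1)$ label queries, matching the statement. No further argument is needed since the two implications of Lemma~\ref{lem:NS-empirical} fire in disjoint regimes and each holds on the same high-probability event produced by that lemma. I do not anticipate a genuine obstacle here; the only care needed is to check that the constants line up, namely that the ``slack'' $\eps/8$ in the decision threshold lies strictly between the $0$ slack in the first implication's hypothesis and the $\eps/4$ slack in the second implication's hypothesis, which is immediate.
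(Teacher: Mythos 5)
Your proof is correct and follows essentially the same structure as the paper's: completeness via Lemma~\ref{lem:NSInt-piecewise} followed by the first implication of Lemma~\ref{lem:NS-empirical}, soundness via the contrapositive of Lemma~\ref{lem:NSnonInt-piecewise} followed by the second implication of Lemma~\ref{lem:NS-empirical}, with the query count $q = m(\ell+1)$ read off from the tester's definition. No meaningful differences.
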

\begin{proof}
If $f \in \F_{\k}(\H)$, then 
Lemma~\ref{lem:NSInt-piecewise} implies it has
$\NS_{\delta}(f;\H) \leq (\k-1)\frac{\delta}{2}$,
so that Lemma~\ref{lem:NS-empirical} implies that with probability at least $2/3$, 
$\widehat{\NS}_{\delta}(f;\H) \leq$\break $(\k-1)\frac{\delta}{2} \left( 1 + \frac{\eps}{8} \right)$,
and hence the tester will output \accept.

On the other hand, if $f$ is $\eps$-far from every function in $\F_{\k}(\H)$, 
then Lemma~\ref{lem:NSnonInt-piecewise} implies that
$\NS_{\delta}(f;\H) > (\k-1)\frac{\delta}{2}(1+\frac{\eps}{4})$,
so that Lemma~\ref{lem:NS-empirical} implies that with probability at least $2/3$,
$\widehat{\NS}_{\delta}(f;\H) 
> (\k-1)\frac{\delta}{2}(1+\frac{\eps}{8})$,
and hence the tester will output \reject.

The claim about the number of samples and number of queries is immediate from the definition 
of the tester.
\end{proof}


Theorem~\ref{thm:active} immediately follows from this result for any $\k \geq 80/\eps$, since 
$m (\ell+1) = O\!\left(\frac{\dim}{\eps^{8}}\ln\!\left(\frac{1}{\eps}\right)\right)$
and 
$s = O\!\left(\frac{\dim \k}{\eps^{6}} \ln\!\left(\frac{1}{\eps}\right)\right)$.

For $\k < 80/\eps$, there is a trivial tester satisfying Theorem~\ref{thm:active}, 
based on the learn-then-validate technique of \cite{kearns:00}, which in fact works for any distribution $\Px$.
Specifically, in this case, we can take 
$s = \left\lceil \frac{c_{1} \dim \k}{\eps} \ln(2e\k) \ln\!\left(\frac{1}{\eps}\right) \right\rceil + \left\lceil \frac{c_{2}}{\eps} \right\rceil = O\!\left(\frac{\dim}{\eps^{2}} \ln^{2}\!\left(\frac{1}{\eps}\right)\right)$ 
iid $\Px$ samples, for appropriate numerical constants $c_{1},c_{2} \geq 1$, 
and query for the $f$ values for these $s$ samples (so $q = s$ here).
We then find a function $\hat{f} \in \F_{\k}(\H)$ consistent with $f$ on the first $\left\lceil \frac{c_{1} \dim \k}{\eps} \ln(2e\k) \ln\!\left(\frac{1}{\eps}\right) \right\rceil$ of these 
($\hat{f}$ chosen independently from the rest of the samples), if such a function $\hat{f}$ exists;
we then check whether $\hat{f}$ agrees with $f$ on at least 
$(1-\eps/2)\left\lceil \frac{c_{2}}{\eps} \right\rceil$ of the remaining $\left\lceil \frac{c_{2}}{\eps} \right\rceil$ samples.
If this $\hat{f}$ exists and satisfies this condition, then we output \accept, and otherwise we output \reject.
We can bound the graph dimension of $\F_{\k}(\H)$ as follows.
For any $n$ distinct points $x_{1},\ldots,x_{n} \in \reals$ and $n$ values $y_{1},\ldots,y_{n} \in \Y$, 
the number of distinct $\{(x,g(x)) : x \in \X\} \cap \{(x_{1},y_{1}),\ldots,(x_{n},y_{n})\}$ sets that can be realized by functions $g \in \F_{\k}(\H)$
is at most $\left(\frac{e n}{\dim}\right)^{\dim\k} \left(\frac{e n}{\k}\right)^{\k}$,
obtained by applying Sauer's lemma within each subset $(t_{j-1},t_{j}] \cap \{x_{1},\ldots,x_{n}\}$ and multiplying them 
to get at most $\left(\frac{e n}{\dim}\right)^{\dim\k}$ possible classifications for any fixed $t_{j}$ values, 
and then multiplying by a bound $\left(\frac{e n}{\k}\right)^{\k}$ on the number of ways to partition $\{x_{1},\ldots,x_{n}\}$ 
into at most $\k$ intervals.
Since $\left(\frac{e n}{\dim}\right)^{\dim\k} \left(\frac{e n}{\k}\right)^{\k}$ 
is strictly less than $2^{n}$ for any $n > 4 \dim \k \log_{2}( 2 e \k )$, 
the graph dimension of $\F_{\k}(\H)$ is at most $4 \dim \k \log_{2}( 2 e \k )$.
Thus, if $f \in \F_{\k}(\H)$, then standard VC bounds for the realizable case \cite{vapnik:74,blumer:89} imply that,
for an appropriate choice of the numerical constant $c_{1}$, with probability at least $5/6$, the function $\hat{f}$ 
will have $\Px(x : \hat{f}(x) \neq f(x)) < \eps/4$.  Also, for an appropriately large numerical constant $c_{2}$, 
a Chernoff bound implies that, with probability at least $5/6$, if $\Px(x : \hat{f}(x) \neq f(x)) < \eps/4$ then 
$\hat{f}$ will agree with $f$ on at least $(1-\eps/2)\left\lceil \frac{c_{2}}{\eps} \right\rceil$ of the last $\left\lceil \frac{c_{2}}{\eps} \right\rceil$ samples.
By the union bound, both of these events occur simultaneously with probability at least $2/3$, 
and the tester will output \accept~when they occur.
On the other hand, if $f$ is $\eps$-far from every function in $\F_{\k}(\H)$, 
then either $\hat{f}$ will not exist (in which case the tester outputs \reject), 
or else $\hat{f}$ is some function in $\F_{\k}(\H)$, so that $\Px(x : \hat{f}(x) \neq f(x)) > \eps$.
Therefore, for an appropriately large choice of the numerical constant $c_{2}$, 
a Chernoff bound 
implies that 
with probability at least $2/3$, if $\hat{f}$ exists,  
then it disagrees with $f$ on strictly more than $\frac{\eps}{2}\left\lceil \frac{c_{2}}{\eps} \right\rceil$ 
of the last $\left\lceil \frac{c_{2}}{\eps} \right\rceil$ samples, so that the tester will output \reject.

\section{Piecewise Constant Functions}
\label{sec:constant}

Next, we restrict focus to the special case of piecewise constant functions: 
that is, 
throughout this subsection, take $\H$ as the set of all \emph{constant} functions $\X \to \Y$.
In this case, we study both active and passive testing.  We construct a passive tester achieving 
the bound in Theorem~\ref{thm:constant}, as well as an active tester whose number of queries 
has an improved dependence on $\eps$ compared to Theorem~\ref{thm:active}.
Unlike the above general active tester, 
this construction follows more-closely the construction of testers for unions of intervals from \cite{yang:12},
and indeed recovers the same dependences on $\k$ and $\eps$ from that work, now for this more-general 
problem of testing piecewise-constant functions.

Since $\H$ is fixed as the set of constant functions $\X \to \Y$ in this section, 
we simply write $\NS_{\delta}(f)$ to abbreviate $\NS_{\delta}(f;\H)$, which (as we argue below) 
is consistent with the notion of noise sensitivity used in the prior literature \cite{yang:12}.
Fix any $\eps \in (0,1/2)$ and consider the case $\k \geq 80/\eps$.
Let 
$\delta = \frac{\eps^{2}}{32 \k}$, 
$m^{\prime} = \left\lceil \frac{c}{\eps^{4}} \right\rceil$,\break
$n = 1+\left\lceil 2 \sqrt{ \lceil 1/\delta \rceil} \right\rceil$,
and 
$s^{\prime} = 4 n m^{\prime}$, 
for an appropriate choice of numerical constant $c \geq 1$ from the analysis below.
Now the active and passive testers both sample $s^{\prime}$ points $z_{1}^{\prime},\ldots,z_{s^{\prime}}^{\prime}$ independent ${\rm Uniform}(0,1)$.
Let $t_{1},\ldots,t_{m^{\prime}}$ be the first $m^{\prime}$ distinct values $t$ in $\left\{ 1, \ldots s^{\prime} / n \right\}$
for which $\exists i,j \in  \{ (t-1) n + 1, \ldots, t n \}$ with $i < j$ 
and $|z_{i}^{\prime} - z_{j}^{\prime}| < \delta$ while $z_{i}^{\prime} \in (\delta,1-\delta)$, 
and for each $r \in \{1,\ldots,m^{\prime}\}$, denote by $i_{r}$ the smallest integer $i > (t_{r}-1)n$ with $\min_{j \in \{i+1,\ldots,t_{r}n\}} |z_{i}^{\prime}-z_{j}^{\prime}| < \delta$ while $z_{i}^{\prime} \in (\delta,1-\delta)$,
and denote by $j_{r}$ the smallest integer $j > i_{r}$ with $|z_{i_{r}}^{\prime} - z_{j}^{\prime}| < \delta$;
if there do not exist $m^{\prime}$ such values $t_{r}$, then the tester may output any response, and this is considered a failure event.
If these values do exist, then for each $r \leq m^{\prime}$, denote $z_{r} = z_{i_{r}}^{\prime}$ and $y_{r} = z_{j_{r}}^{\prime}$.
The active tester queries the $f$ values for $z_{r}$ and $y_{r}$, for each $r \leq m^{\prime}$,
whereas the passive tester (necessarily) queries the $f$ values for all $s^{\prime}$ points $z_{i}^{\prime}$.
Both testers then calculate the following quantity
\[
\widehat{\NS}_{\delta}^{\prime}(f) = \frac{1-2\delta}{m^{\prime}} \sum_{r=1}^{m^{\prime}} \ind[ f(z_{r}) \neq f(y_{r}) ]
\]
and outputs \accept~if $\widehat{\NS}_{\delta}^{\prime}(f) \leq (\k-1)\frac{\delta}{2}(1+\frac{\eps}{8})$,
and otherwise outputs \reject.

Comparing $\widehat{\NS}_{\delta}^{\prime}(f)$ to the quantity $\widehat{\NS}_{\delta}(f;\H)$ defined above, 
the main difference is that, for each of the points $z_{r}$, we use only a \emph{single} point $y_{r}$ 
sampled from $(z_{r}-\delta,z_{r}+\delta)$, rather than $\ell$ points.  For this reason, the total 
number of examples (both labeled and unlabeled) required to calculate $\widehat{\NS}_{\delta}^{\prime}(f)$ 
is significantly smaller than the number required to calculate $\widehat{\NS}_{\delta}(f;\H)$.
Nevertheless, in this special case of piecewise constant functions, we find that the 
guarantees we had for $\widehat{\NS}_{\delta}(f;\H)$ from Lemma~\ref{lem:NS-empirical} above 
remain valid for the quantity $\widehat{\NS}_{\delta}^{\prime}(f)$.  Specifically, we have the following lemma.

\begin{lemma}
\label{lem:NS-constant-empirical}
If $\k \geq 80/\eps$, 
then for an appropriate choice of the numerical constants $c$,
for any measurable function $f : \X \to \Y$, 
with probability at least $2/3$, the values $t_{1},\ldots,t_{m^{\prime}}$ exist, and 
\begin{align*}
\NS_{\delta}(f) \leq (\k-1) \frac{\delta}{2} & \implies \widehat{\NS}_{\delta}^{\prime}(f) \leq (\k-1) \frac{\delta}{2} \left( 1 + \frac{\eps}{8} \right)
\\ \NS_{\delta}(f) > (\k-1) \frac{\delta}{2} \left( 1 + \frac{\eps}{4} \right) & \implies \widehat{\NS}_{\delta}^{\prime}(f) > (\k-1) \frac{\delta}{2} \left( 1 + \frac{\eps}{8} \right).
\end{align*}
\end{lemma}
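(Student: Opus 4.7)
The plan is to exploit the fact that when $\H$ is the set of constant functions, $\H_{(x,f(x))}$ is the singleton $\{h \equiv f(x)\}$, so $\NSrm_\delta(f,x;\H) = \P(f(x') \neq f(x) \mid x)$ with $x' \sim {\rm Uniform}(x-\delta,x+\delta)$, and hence $\ind[f(z_r) \neq f(y_r)]$ is already an unbiased one-sample estimate of $\NSrm_\delta(f,z_r;\H)$. Consequently, no VC uniform-convergence step is needed, and the proof reduces to: (i) verifying existence of the $m'$ pairs $(z_r, y_r)$, (ii) identifying their exact joint distribution, and (iii) a plain Chernoff bound.

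For (i), the $s'/n = 4m'$ blocks of $n = 1 + \lceil 2\sqrt{\lceil 1/\delta\rceil}\rceil$ iid ${\rm Uniform}(0,1)$ points are independent. A birthday-style calculation shows that each block contains a pair within $\delta$ of each other whose earlier coordinate lies in $(\delta,1-\delta)$ with probability bounded below by an absolute constant (say $\geq 1/2$), so a Chernoff bound on the block successes gives probability $\geq 11/12$ that at least $m'$ blocks succeed. For (ii), fix a successful block, let $I$ denote the smallest eligible index and $J$ the smallest subsequent index with $|z_J' - z_I'| < \delta$, and condition in stages. Given $I = i$ and $z_I' = z$, the value $z_J'$ is uniform on $(z - \delta, z + \delta)$ by symmetry. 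Meanwhile the joint density of $(I, z_I')$ on $\{1,\ldots,n\}\times(\delta,1-\delta)$ is proportional to $\P(\text{first } i - 1 \text{ indices fail the criterion})\cdot(1 - (1-2\delta)^{n - i})\cdot \ind[z\in(\delta,1-\delta)]$; crucially the $z$-dependence is only through the indicator, so for each $i$ the conditional distribution of $z_I'$ is uniform on $(\delta,1-\delta)$, and thus the unconditional marginal of $z_r$ is ${\rm Uniform}(\delta,1-\delta)$ with density $\frac{1}{1-2\delta}$. In particular the pairs $(z_r,y_r)$ from distinct blocks are iid, and
\[
\E\bigl[(1-2\delta)\ind[f(z_r)\neq f(y_r)]\bigr] = \int_\delta^{1-\delta} \NSrm_\delta(f,z;\H)\,\d z \in [\NS_\delta(f;\H) - 2\delta, \NS_\delta(f;\H)],
\]
which is exactly what the factor $1 - 2\delta$ in the estimator de-biases.

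For (iii), a Chernoff bound on the $m' = \Theta(\eps^{-4})$ iid indicators gives $\widehat{\NS}_\delta'(f)$ within an additive $O(\eps\cdot(\k-1)\frac{\delta}{2})$ of its expectation with probability $\geq 11/12$, using that both the mean and the target thresholds are of order $(\k-1)\delta/2 = \Theta(\eps)$ and that $m'\eps^{4}$ is a sufficiently large constant. The assumption $\k \geq 80/\eps$ forces the edge-effect error $2\delta$ to also be at most $(\k-1)\frac{\delta}{2}\cdot O(\eps)$, so a union bound over the events from (i) and (iii) leaves enough room between the thresholds $1 + \eps/8$ and $1 + \eps/4$ to conclude both implications, exactly as at the end of the proof of Lemma~\ref{lem:NS-empirical}. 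The main obstacle is step (ii): the ``first eligible index'' rule could a priori bias the marginal of $z_r$, and the key observation is that the probability of a given block position being skipped over is independent of the spatial coordinate, so the conditional marginal of $z_r$ is exactly uniform on $(\delta,1-\delta)$ regardless of $I$.
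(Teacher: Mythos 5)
Your overall strategy matches the paper's: (i) a birthday-problem argument to show each block succeeds with constant probability, so $m'$ pairs $(z_r,y_r)$ exist with high probability; (ii) a claim that $z_r \sim {\rm Uniform}(\delta,1-\delta)$ and $y_r \mid z_r \sim {\rm Uniform}(z_r-\delta,z_r+\delta)$, which makes $\ind[f(z_r)\neq f(y_r)]$ essentially an unbiased estimate of $\NSrm_\delta(f,z_r;\H)$ after the $(1-2\delta)$ correction; and (iii) a Chernoff bound on $m' = \Theta(\eps^{-4})$ iid indicators. Steps (i) and (iii) are fine, and you correctly identified that the constant-function special case removes the need for any VC uniform-convergence argument (this is exactly why $\widehat{\NS}'_\delta$ with a single companion point per $z_r$ suffices here).

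However, your detailed justification of step (ii) contains a genuine error. You factor the joint density of $(I, z_I')$ as $\P(\text{first } i-1 \text{ indices fail}) \cdot (1-(1-2\delta)^{n-i}) \cdot \ind[z\in(\delta,1-\delta)]$ and assert that the first factor is independent of $z$. Neither part of this is correct. The ``fail'' criterion for an earlier index $r<i$ is that no \emph{later} index $j>r$ falls within $\delta$ of $w_r$; since $w_i = z$ is itself one of those later indices, and since the witnesses $w_{i+1},\ldots,w_n$ are shared between the ``first $i-1$ fail'' event and the ``$\exists j>i$ within $\delta$ of $z$'' event, the two events are not independent given $w_i=z$, and $\P(\text{first } i-1 \text{ fail} \mid w_i=z)$ does depend on $z$. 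Concretely, when $z$ is near the edge of $(\delta,1-\delta)$, part of the ball $(z-\delta,z+\delta)$ lies outside $(\delta,1-\delta)$, so an earlier $w_r$ within $\delta$ of $z$ is less likely to itself lie in $(\delta,1-\delta)$ and hence more likely to fail; a direct computation for $n=3$ shows the density of $z_{i_1}'$ is strictly larger near $\delta$ and $1-\delta$ than in the interior. So the marginal of $z_r$ is not exactly ${\rm Uniform}(\delta,1-\delta)$, and the ``key observation'' you offer to dispose of the selection bias is not valid. (For what it is worth, the paper simply asserts the same distributional claim without justification, so this is a point where a complete proof would require either a corrected selection rule or a quantitative bound on the $O(\delta)$-order distortion -- the latter would still go through since all relevant tolerances are of order $\eps\cdot(\k-1)\delta/2$ and $\delta$ itself is much smaller, but that argument needs to be made rather than waved away by a symmetry claim that does not hold.)
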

\begin{proof} 
The existence of the values $t_{1},\ldots,t_{m^{\prime}}$ (with high probability) follows from the so-called \emph{birthday problem} as follows.
Let $\beta = 1/\lceil 1/\delta \rceil$ and partition $[0,1)$ into disjoint intervals $[(i-1)\beta,i\beta)$, $i \in \{1,\ldots,1/\beta\}$.
For any $n$ iid ${\rm Uniform}(0,1)$ samples $w_{1},\ldots,w_{n}$ (for $n$ as defined above), 
the probability none of these intervals contains more than one $w_{j}$ value is 
$\prod_{j=1}^{n-1} (1-j\beta) \leq \exp\{ - \beta n (n-1)/2 \}$, 
and noting that $n \geq 1 + 2 \sqrt{ 1 / \beta }$, 
this is at most $e^{-2}$.
Furthermore, on the event that there exists at least one interval containing more than one $w_{j}$ value,
for $\hat{j}$ defined as the first $j$ such that $\exists j^{\prime} < j$ with $w_{j}$ and $w_{j^{\prime}}$ in the same interval, 
we note that the (unconditional) distribution of $w_{\hat{j}}$ is ${\rm Uniform}(0,1)$.  Therefore, with probability at least $1 - e^{-2} - 4\beta$, 
there exists some $i \in \{3,\ldots,(1/\beta)-2\}$ such that at least two $w_{j}$ values are in $[(i-1)\beta,i\beta)$.
Since $2\beta > 2\delta / (1+\delta) > \delta$, these intervals $[(i-1)\beta,i\beta)$ are strictly contained within $(\delta,1-\delta)$.
Furthermore, since $\beta < \delta < 1/32$, we have $1 - e^{-2} - 4\beta > 1 - e^{-2} - 1/8 > 1/2$.  
%
%
Thus, for each $t \in \{1,\ldots, s^{\prime} / n \}$, 
the sequence $\{z_{i}^{\prime} : i \in \{ (t-1) n + 1, \ldots, t n \} \}$
has probability at least $1/2$ of containing a pair $z_{i}^{\prime},z_{j}^{\prime}$ ($i < j$) with $|z_{i}^{\prime} - z_{j}^{\prime}| < \delta$ and $z_{i}^{\prime} \in (\delta,1-\delta)$.
In particular, the expected number of indices $t$ for which such a pair exists is at least $(1/2) s^{\prime} / n \geq 2m^{\prime}$.
Since these sequences are independent (over $t$),
a Chernoff bound implies that with probability at least $1 - \exp\{ - (1/2) (s^{\prime}/n) / 8 \} \geq 5/6$ (for any choice of $c \geq 4\ln(6)$), 
at least $m^{\prime}$ of these sequences contain such a pair, so that on this event the values $t_{1},\ldots,t_{m^{\prime}}$ indeed exist.

Next, note that since $\H$ is the set of \emph{constant} functions, 
for any $x,x^{\prime}$ and any $h \in \H_{(x,f(x))}$, we have $h(x^{\prime}) = h(x)$, 
which implies that for $x \sim {\rm Uniform}(0,1)$ and for $x^{\prime} \sim {\rm Uniform}(x-\delta,x+\delta)$ given $x$, 
we have $\NS_{\delta}(f) = \P(f(x) \neq f(x^{\prime}))$.
Furthermore, $z_{1}$ has distribution ${\rm Uniform}(\delta,1-\delta)$ and the conditional distribution of $y_{1}$ given $z_{1}$ is ${\rm Uniform}(z_{1}-\delta,z_{1}+\delta)$.
Therefore, 
\begin{align*}
\P( f(z_{1}) \neq f(y_{1}) ) 
& = \P( f(x) \neq f(x^{\prime}) | x \in (\delta,1-\delta) ) 
\\ & = \frac{1}{1-2\delta} \P( f(x) \neq f(x^{\prime}) \land x \in (\delta,1-\delta) ),
\end{align*}
and this rightmost quantity is at least as large as 
\begin{align*}
\frac{1}{1-2\delta} \left( \P( f(x) \neq f(x^{\prime}) ) - 2\delta \right) 
& = \frac{1}{1-2\delta} \left( \NS_{\delta}(f) - 2\delta \right)
\\ & \geq \frac{1}{1-2\delta} \left( \NS_{\delta}(f) - (\k-1) \frac{\delta}{2} \frac{\eps}{19} \right)
\end{align*}
and at most as large as 
\begin{equation*}
\frac{1}{1-2\delta} \P( f(x) \neq f(x^{\prime}) ) = \frac{1}{1-2\delta} \NS_{\delta}(f).
\end{equation*}
Thus, 
\begin{equation*}
\NS_{\delta}(f) \leq (\k-1) \frac{\delta}{2} \implies \P( f(z_{1}) \neq f(y_{1}) ) \leq \frac{1}{1-2\delta} (\k-1) \frac{\delta}{2}
\end{equation*}
and since $\frac{\eps}{4} - \frac{\eps}{19} = \frac{15}{76} \eps$, 
\begin{equation*}
\NS_{\delta}(f) > (\k-1) \frac{\delta}{2} \!\left( 1 + \frac{\eps}{4} \right) \implies \P( f(z_{1}) \neq f(y_{1}) ) > \frac{1}{1-2\delta} (\k-1) \frac{\delta}{2} \!\left( 1 + \frac{15}{76} \eps \right)\!.
\end{equation*}

Now recall that $\frac{1}{1-2\delta} (\k-1) \frac{\delta}{2} > \frac{\eps^{2}}{65}$,
and note that, if they exist, the pairs $(z_{r},y_{r})$ are iid over $r \leq m^{\prime}$.
Therefore (recalling the definition of $\widehat{\NS}_{\delta}^{\prime}(f)$ from above), 
a Chernoff bound implies that, for an appropriately large choice of the numerical constant $c$, 
with probability at least $5/6$, 
\begin{equation*}
\P( f(z_{1}) \neq f(y_{1}) ) \leq \frac{1}{1-2\delta} (\k-1)\frac{\delta}{2} \implies \widehat{\NS}_{\delta}^{\prime}(f) \leq (\k-1) \frac{\delta}{2} \left( 1 + \frac{\eps}{8} \right)
\end{equation*}
and
\begin{align*}
& \P( f(z_{1}) \neq f(y_{1}) ) > \frac{1}{1-2\delta} (\k-1)\frac{\delta}{2} \left( 1 + \frac{15}{76} \eps \right) 
\\ & \implies \widehat{\NS}_{\delta}^{\prime}(f) > (\k-1) \frac{\delta}{2} \left( 1 + \frac{15}{76} \eps \right) \left( 1 - \frac{\eps}{16} \right) 
> (\k-1) \frac{\delta}{2} \left( 1 + \frac{\eps}{8} \right). 
\end{align*}

Altogether, on the above two events, we have
\begin{equation*}
\NS_{\delta}(f) \leq (\k-1) \frac{\delta}{2} \implies \widehat{\NS}_{\delta}^{\prime}(f) \leq (\k-1) \frac{\delta}{2} \left( 1 + \frac{\eps}{8} \right)
\end{equation*}
and
\begin{equation*}
\NS_{\delta}(f) > (\k-1) \frac{\delta}{2} \left( 1 + \frac{\eps}{4} \right) \implies \widehat{\NS}_{\delta}^{\prime}(f) > (\k-1) \frac{\delta}{2} \left( 1 + \frac{\eps}{8} \right).
\end{equation*}

To complete the proof, we note that both of these events occur simultaneously with probability at least $2/3$ by the union bound.
\end{proof}

Finally, we have the following result on testing of piecewise constant functions.

\begin{theorem}
\label{thm:piecewise-constant-tester}
If $\k \geq 80/\eps$, then 
the procedure that outputs \accept~if $\widehat{\NS}_{\delta}^{\prime}(f) \leq (\k-1)\frac{\delta}{2}(1+\frac{\eps}{8})$,
and otherwise outputs \reject, is an $s^{\prime}$-sample $q$-query $\eps$-tester for the class of 
$\k$-piecewise constant functions under the distribution ${\rm Uniform}(0,1)$,
where $q = 2 m^{\prime}$ in the active testing variant and $q = s^{\prime}$ for the passive testing variant.
\end{theorem}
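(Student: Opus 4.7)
The plan is to assemble the three structural lemmas of Sections~2 and 5 in exactly the same pattern used for Theorem~\ref{thm:active-piecewise-function-tester}, just substituting the constant-function empirical concentration bound (Lemma~\ref{lem:NS-constant-empirical}) for the general one (Lemma~\ref{lem:NS-empirical}). I will split into the two cases of Definition~\ref{defn:tester} and verify the acceptance/rejection probabilities, then read off $s^{\prime}$ and $q$ from the algorithm description.

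First, suppose $\target \in \F_{\k}(\H)$. Then Lemma~\ref{lem:NSInt-piecewise} gives $\NS_{\delta}(\target) \leq (\k-1)\frac{\delta}{2}$. Applying the first implication of Lemma~\ref{lem:NS-constant-empirical} (which is valid precisely because $\k \geq 80/\eps$), with probability at least $2/3$ the values $t_{1},\ldots,t_{m^{\prime}}$ all exist and $\widehat{\NS}_{\delta}^{\prime}(\target) \leq (\k-1)\frac{\delta}{2}(1+\frac{\eps}{8})$, so the tester outputs \accept.

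Second, suppose $\dist(\target, \F_{\k}(\H)) \geq \eps$. Taking the contrapositive of Lemma~\ref{lem:NSnonInt-piecewise} yields $\NS_{\delta}(\target) > (\k-1)\frac{\delta}{2}(1+\frac{\eps}{4})$. Invoking the second implication of Lemma~\ref{lem:NS-constant-empirical}, with probability at least $2/3$ the required pairs exist and $\widehat{\NS}_{\delta}^{\prime}(\target) > (\k-1)\frac{\delta}{2}(1+\frac{\eps}{8})$, so the tester outputs \reject.

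Finally, the sample count $s^{\prime}=4 n m^{\prime}$ is immediate from the construction, and the query counts follow directly: the active variant queries $\target$ only at the $2 m^{\prime}$ designated points $z_{r},y_{r}$, while the passive variant queries all $s^{\prime}$ sampled points by definition. Since all the substantive work --- the $(\k-1)\delta/2$ upper bound, the closeness-to-$\F_{\k}(\H)$ reverse direction, and the empirical concentration of $\widehat{\NS}_{\delta}^{\prime}$ around $\NS_{\delta}$ on both sides of the acceptance threshold --- is already carried out in the cited lemmas, this final assembly is purely mechanical; there is no real technical obstacle beyond respecting the hypothesis $\k \geq 80/\eps$ that enters Lemma~\ref{lem:NS-constant-empirical}. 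The complementary regime $\k < 80/\eps$ lies outside the scope of this theorem and is handled separately, by the learn-then-validate argument analogous to the end of Section~\ref{sec:active}, when deriving Theorem~\ref{thm:constant} in full.
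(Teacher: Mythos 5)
Your proposal is correct and follows the paper's proof essentially verbatim: Lemma~\ref{lem:NSInt-piecewise} for the in-class case, the contrapositive of Lemma~\ref{lem:NSnonInt-piecewise} for the $\eps$-far case, Lemma~\ref{lem:NS-constant-empirical} to transfer both to the empirical estimate with probability $2/3$, and the sample/query counts read directly from the construction. Your explicit note that the contrapositive is being used, and that the $\k < 80/\eps$ regime is deferred to the learn-then-validate argument when assembling Theorem~\ref{thm:constant}, matches the paper's structure exactly.
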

\begin{proof}
This proof is nearly identical to that of Theorem~\ref{thm:active-piecewise-function-tester}.
If $f \in \F_{\k}(\H)$, then 
Lemma~\ref{lem:NSInt-piecewise} implies it has
$\NS_{\delta}(f) \leq (\k-1)\frac{\delta}{2}$,
so that Lemma~\ref{lem:NS-constant-empirical} implies that with probability at least $2/3$, 
$\widehat{\NS}_{\delta}^{\prime}(f) \leq (\k-1)\frac{\delta}{2} \left( 1 + \frac{\eps}{8} \right)$,
and hence the tester will output \accept.

On the other hand, if $f$ is $\eps$-far from every function in $\F_{\k}(\H)$, 
then Lemma~\ref{lem:NSnonInt-piecewise} implies that
$\NS_{\delta}(f) > (\k-1)\frac{\delta}{2}(1+\frac{\eps}{4})$,
so that Lemma~\ref{lem:NS-constant-empirical} implies that with probability at least $2/3$,
$\widehat{\NS}_{\delta}^{\prime}(f) 
> (\k-1)\frac{\delta}{2}(1+\frac{\eps}{8})$,
and hence the tester will output \reject.

The number of samples and number of queries in the claim are immediate from the definition 
of the two testers.
\end{proof}

The upper bounds claimed in Theorem~\ref{thm:constant} immediately follow from this theorem, 
noting that 
$s^{\prime} = O\!\left(\frac{\sqrt{\k}}{\eps^{5}}\right)$, 
%
and also noting that for any $\k < 80/\eps$ we can obtain the result with 
the tester based on the learn-then-validate technique of \cite{kearns:00},
as described above at the end of Section~\ref{sec:active}.  In this latter case, the tester uses 
$\propto \frac{\k}{\eps} \ln(2e\k) \ln\!\left(\frac{1}{\eps}\right) = O\!\left(\frac{1}{\eps^{2}} \ln^{2}\!\left(\frac{1}{\eps}\right)\right)$ samples
(since $\dim = 1$ when $\H$ is the set of constant functions).

For the lower bound claimed in Theorem~\ref{thm:constant}, first note that our assumption of $\dim > 0$ 
implies, in the case of $\H$ the set of constant functions, that $|\Y| \geq 2$.
Therefore, we can \emph{reduce} testing unions of $\lfloor (\k-1)/2 \rfloor$ intervals to testing 
$\k$-piecewise constant functions by associating the binary labels $0$ and $1$ with any 
two distinct labels $y_{0},y_{1} \in \Y$.  Then any $\{0,1\}$-valued function $f_{01}$ can be mapped to a 
corresponding $\{y_{0},y_{1}\}$-valued function $f$, where 
$f$ is a $\k$-piecewise constant function if $f_{01}$ is a union of $\lfloor (\k-1)/2 \rfloor$ intervals, 
while $f$ is $\eps$-far from any $\k$-piecewise constant function if 
$f_{01}$ is $\eps$-far from any union of $\lfloor (\k-1)/2 \rfloor+1$ intervals.
This increase by one in the latter case is because the complement of a union of $\lfloor (\k-1)/2 \rfloor$ intervals 
is also $\k$-piecewise constant, but is possibly only representable as a union of $\lfloor (\k-1)/2 \rfloor+1$ intervals.  
To account for this increase by one, noting that the claim of an $\Omega\!\left(\sqrt{\k}\right)$ lower bound only regards large values of $\k$, 
if we suppose $\k > 1/\eps$, then any union of $\lfloor (\k-1)/2 \rfloor+1$ intervals is within distance $1/\k < \eps$ of a union of 
$\lfloor (\k-1)/2 \rfloor$ intervals.  
Thus, $f_{01}$ is $\eps$-far from any union of $\lfloor (\k-1)/2 \rfloor+1$ intervals
if it is $2\eps$-far from any union of $\lfloor (\k-1)/2 \rfloor$ intervals.
Altogether, we have that $f$ is $\eps$-far from any $\k$-piecewise constant function if 
$f_{01}$ is $2\eps$-far from any union of $\lfloor (\k-1)/2 \rfloor$ intervals.
So, by this reduction, the $\Omega\!\left(\sqrt{\lfloor (\k-1)/2 \rfloor}\right) = \Omega\!\left(\sqrt{\k}\right)$ lower bound 
of \cite{kearns:00,yang:12} for passive testing of unions of $\lfloor (\k-1)/2 \rfloor$ intervals implies a corresponding $\Omega\!\left(\sqrt{\k}\right)$ 
lower bound for testing $\k$-piecewise constant functions.\footnote{Technically, the result of \cite{kearns:00} establishes this lower bound 
for the problem of testing whether $f_{01}$ is a union of $n$ intervals or returns uniform random $\{0,1\}$ labels.  However, essentially the 
same argument would apply if, in the latter case, instead of random labels, we took $f_{01}$ to be a randomly-chosen binary \emph{function} 
based on a partition of $[0,1]$ into $n^{\prime} \gg n$ equal-sized regions, which would be at least $1/4 > 2\eps$ distance from any 
union of $n$ intervals with very high probability.}
This completes the proof of Theorem~\ref{thm:constant}.

\section{Open Problem on the Query Complexity of Testing Polynomials}
\label{sec:polynomials}

While the result above for active testing, when specialized to testing $\k$-piecewise degree-$\p$ polynomials, 
obtains the optimal dependence on the number of pieces $\k$, we were not able to 
show optimality in the degree $\p$.  This leads to an even more-basic question:\\

\noindent {\bf Open Problem:}
What is the optimal dependence on $\p$ in the 
query complexity of testing degree-$\p$ real-valued polynomials under $\Px = {\rm Uniform}(0,1)$?\\

This question is open at this time, for \emph{both} the active and passive property testing settings,
and indeed also for the stronger value-query setting 
(where the tester can query for $f(x)$ at any $x \in \X$).

There is a trivial $\p+1 + \frac{1}{\eps}\ln(3)$ upper bound for both active and passive testing, 
based on the learn-then-validate technique, 
since $\p+1$ random samples uniquely specify any degree-$\p$ polynomial (with probability one): 
that is, if we fit a degree-$\p$ polynomial $\hat{f}$ to the first $\p+1$ random points, 
then if $\target$ is a degree-$\p$ polynomial we would have $\hat{f} = f$ and thus the two would agree 
on the remaining $\frac{1}{\eps}\ln(3)$ points, in which case we may decide \accept; 
on the other hand, if $\target$ is $\eps$-far from all degree-$\p$ polynomials, then it is $\eps$-far from $\hat{f}$, and hence 
with probability at least $2/3$ at least one of $\frac{1}{\eps}\ln(3)$ random samples $x$ will have $\hat{f}(x) \neq \target(x)$, 
in which case we may decide \reject.

One might na\"{i}vely think that, since it is possible to fit any $\p+1$ values (at distinct $x$'s)
with a degree-$\p$ polynomial, a \emph{lower bound} of $\Omega(\p)$ should also hold.
However, we note that it is also possible to fit any $\k$ values (at distinct $x$'s) 
with a $\k$-piecewise constant function, and yet above we proved 
it is possible to test $\k$-piecewise constant functions using a number of queries 
with $\sqrt{\k}$ dependence on $\k$ by passive testing or \emph{independent} of $\k$ by active testing.
So the mere ability to fit $\p+1$ arbitrary values with a degree-$\p$ polynomial 
is not in-itself sufficient as a basis for proving a lower bound on the query complexity of 
testing degree-$\p$ polynomials.

\section*{Bibliography}
\bibliography{learning}

\end{document}